\documentclass[12pt,reqno]{amsart}
\topmargin= .5cm \textheight= 22.5cm \textwidth= 32cc
\baselineskip=16pt
\usepackage{indentfirst, amssymb,amsmath,amsthm}
\usepackage[caption=false]{subfig}
\usepackage{graphicx}
\usepackage{tabulary}
\evensidemargin= .9cm
\oddsidemargin= .9cm
\newtheorem{theo}{Theorem}[section]

\newtheorem{lem}{Lemma}

\newtheorem{cor}{Corollary}[section]

\newcommand{\be}{\begin{equation}}
\newcommand{\ee}{\end{equation}}
\newcommand{\beas}{\begin{eqnarray*}}
\newcommand{\eeas}{\end{eqnarray*}}
\newcommand{\bea}{\begin{eqnarray}}
\newcommand{\eea}{\end{eqnarray}}

\numberwithin{equation}{section}

\begin{document}

\setlength{\unitlength}{1mm} \baselineskip .45cm
\setcounter{page}{1}
\pagenumbering{arabic}
\title[ On Pseudo $B$-symmetric spacetimes]{ On Pseudo $B$-symmetric spacetimes and $f(\mathcal{R})$ gravity}

\author[  ]
{Young Jin Suh, Krishnendu De* and Uday Chand De  }

\address{  Department of Mathematics and RIRCM,
 Kyungpook National University,
Daegu-41566, South Korea. ORCID iD: https://orcid.org/0000-0003-0319-0738}
\email{yjsuh@knu.ac.kr}

\address
 { Department of Mathematics,
 Kabi Sukanta Mahavidyalaya,
The University of Burdwan.
Bhadreswar, P.O.-Angus, Hooghly,
Pin 712221, West Bengal, India. ORCID iD: https://orcid.org/0000-0001-6520-4520}
\email{krishnendu.de@outlook.in }
\address
{ Department of Pure Mathematics, University of Calcutta, West Bengal, India. ORCID iD: https://orcid.org/0000-0002-8990-4609}
\email {uc$_{-}$de@yahoo.com}

\footnotetext {PACS: 04.50.Kd; 98.80.cq; 98.80.jk.}:
\keywords{ Pseudo symmetric spacetimes; Pseudo $Z$-symmetric spacetimes; Perfect fluid spacetimes;  Pseudo B-symmetric spacetimes; modified gravity.}
\thanks{$^{*}$ Corresponding author (krishnendu.de@outlook.in)\\
The first author was supported by the grant NRF-2018-R1D1A1B-05040381 from National Research Foundation of Korea.}
\maketitle
\begin {abstract}
This article delivers the characterization of a pseudo $B$ symmetric spacetimes and we illustrate that a pseudo $B$ symmetric spacetime admitting Codazzi type of $B$-tensor represents a perfect fluid spacetime and if this spacetime admits the time-like convergence criterion, then the pseudo $B$ symmetric spacetime fulfills cosmic strong energy criterion and contains pure matter. Besides, we find in a pseudo $B$ symmetric spacetime with Codazzi type of $B$-tensor the electric part of the Weyl tensor vanishes and has Riemann and Weyl compatible vector fields. Furthermore, it is established that the chosen spacetime with Codazzi type of $B$-tensor is conformally flat and represents a Robertson-Walker spacetime. Also, we calculate the scale factor $\varPsi (t)$ for these spacetimes in a spatially flat Robertson-Walker spacetime. Finally, we study the impact of this spacetime under $f(R)$ gravity scenario and deduce several energy conditions by considering a new model $f\left(\mathcal{R}\right)= e^{(\alpha \mathcal{R})}-ln(\beta \mathcal{R})$ in which $\alpha$ and $\beta$ are positive constants.
\end {abstract}

\section{\textbf{Introduction}}
\hspace{0.3cm}

It is widely accepted that symmetric spaces are crucial to differential geometry. Each symmetric space has a unique geometry, including hyperbolic, elliptical, and Euclidean geometry. We can naturally extend classical harmonic analysis on spheres using symmetric spaces. These spaces have application in diverse areas of mathematics, including bounded symmetric fields, Grassmannian and compact Lie groups. Cartan \cite{cartan} initiated the study of symmetric Riemannian spaces in the second decade of twentieth century, and he acquired a classification of this space.\par

However, symmetric spaces can be viewed from various angles and possess many similar characteristics. Riemannian manifolds with point reflection, parallel curvature tensors, homogeneous spaces with unique isotropy, holonomy, special Killing vector fields, or specific Lie group involution are some possible interpretations for them. Broadly speaking, symmetric spaces are differential manifolds endowed with an evolution at each point, where each point is a fixed isolated point meeting certain requirements. Symmetric Riemannian spaces are just called symmetric spaces when there is no chance of confusion.\par

Suppose $\nabla$ is the Levi-Civita connection of a Riemannian manifold $(M,g)$. According to \cite{cartan}, for $\nabla \mathcal{R}=0$, $M$ is considered as a locally symmetric manifold, in which $\mathcal{R}$ represents the Riemannian curvature tensor of $M$.\par

A manifold $M$ is named Ricci symmetric if $\nabla_{k}\mathcal{R}_{ij}=0$, in which $\mathcal{R}_{ij}$ is the Ricci tensor of $M$. In \cite{mcc}, Chaki introduced pseudo Ricci symmetric manifolds as a generalization of these manifolds. If the Ricci tensor of a non-flat Riemannian manifold $(M^{n},g)$ of dimension $n(>2)$ satisfies the condition
\begin{equation}\label{1.1}
  \nabla _{k}\mathcal{R}_{ij}=2A_{k}\mathcal{R}_{ij}+A_{i}\mathcal{R}_{kj}
+A_{j}\mathcal{R}_{ik},
\end{equation}
then $M$ is considered as pseudo Ricci symmetric manifold, where $\nabla_{k}$ indicates the covariant differentiation with respect to $g$ and $A_{i}$ is a non-zero 1-form. The associated 1-form of $M$ is denoted by $A_{i}$. This manifold $M$ yields a symmetric manifold of Cartan sense if $A_{i}=0,$. A pseudo Ricci symmetric manifold of dimension $n$ is indicated by $(PRS)_{n}$.\par

Weakly Ricci symmetric manifolds were introduced in 1993 by Tamassy and Binh \cite{tamsbin93}. One may note that a specific instance of a weakly Ricci symmetric manifold is a pseudo Ricci symmetric manifold. Suh and Mantica introduced pseudo $Z$ symmetric manifolds denoted by $(PZS)_{n}$, in a recent paper \cite{mantica1}. It was an extension of the concept of pseudo projective-Ricci symmetric manifolds \cite{mcc1}, or pseudo Ricci symmetric manifolds \cite{mcc}.\par

A special second-order, symmetric tensor field $Z$ is called generalized $Z$ tensor and defined by
\begin{equation}\label{1.2}
  Z_{ij}=\mathcal{R}_{ij}+\phi g_{ij},
\end{equation}
in which the arbitrary scalar function is denoted by $\phi $. Transvecting (\ref{1.2}) with $g^{ij}$ yields the scalar $Z$ as:
\begin{equation}\label{1.3}
  Z=n\phi +\mathcal{R}.
\end{equation}

Lorentzian manifolds that admit a globally time-like vector field, physically known as spacetime. A generalized Robertson-Walker (GRW) spacetime (\cite{alias1}, \cite{bychen}) is a Lorentzian manifold of dimension $n$ $\left(n\geq4\right)$ which can be expressed by a warped product $-I\times_{\varPsi^{2}}\stackrel{\ast}{M}$, where $\stackrel{\ast}{M}$ is the Riemannian manifold of dimension $(n-1)$, $I \in \mathbb{R}$ (set of real numbers) is an open interval, and $\varPsi>0$ denotes the scale factor. In particular, a GRW spacetime becomes a Robertson-Walker (RW) spacetime if we assume that $\stackrel{\ast}{M}$ is a 3-dimensional Riemannian manifold and is of constant sectional curvature.\par

A $4$-dimensional Lorentzian manifold $M^{4}$ is called a perfect fluid spacetime (PFS) if the non-zero Ricci tensor $\mathcal{R}_{lk}$ obeys
\begin{equation}\label{1.1a}
\mathcal{R}_{lk}=\alpha g_{lk}+\beta u_{l}u_{k},
\end{equation}
where $\alpha$, $\beta$ are scalars (not simultaneously zero) and the velocity vector $u_{k}$ is unit time-like, that is, $u_{k}u^{k}=-1$, $u^{k}=g^{lk}u_{l}$. The matter field in general relativity (GR) is denoted by $T_{lk}$, which is called the energy momentum tensor. The energy momentum tensor \cite{Neil} for a PFS has the following form
\begin{equation}\label{1.2a}
T_{lk}=\left(\sigma+p\right)u_{l}u_{k}+p g_{lk},
\end{equation}
$p$, $\sigma$ being the isotropic pressure and the energy density.\par

If $\sigma=p$, then the PFS is a stiff matter fluid. Also, if $\sigma+p=0$, $\sigma=3p$ and $p=0$, then the PFS is called the dark energy epoch of the Universe, radiation era and dust matter fluid \cite{chav}, respectively. Without a cosmological constant, the Einstein's field equations are given by
\begin{equation}\label{1.3a}
\kappa T_{lk}-\mathcal{R}_{lk}+\dfrac{\mathcal{R}}{2}\,g_{lk}=0
\end{equation}
in which $\kappa$ denotes gravitational constant and $\mathcal{R}$ is the Ricci scalar.\par

Equations \eqref{1.1a}, \eqref{1.2a} and \eqref{1.3a} together reveal \cite{manticamolinaride}
\begin{equation}\label{1.4a}
	\alpha=\kappa\left(\dfrac{\sigma-p}{2}\right)\quad\mathrm{and}\quad\beta=\kappa\left(\sigma+p\right).
\end{equation}
In $M^{4}$, the Weyl conformal curvature tensor $\mathcal{C}_{lijk}$ is written by
\begin{equation}\label{1.5aa}
	\mathcal{C}_{lijk}=\mathcal{R}_{lijk}-\dfrac{1}{2}\left\{g_{ij}\mathcal{R}_{lk}-g_{ik}\mathcal{R}_{lj}
+g_{lk}\mathcal{R}_{ij}-g_{lj}\mathcal{R}_{ik}\right\}
+\dfrac{\mathcal{R}}{6}\left\{g_{lk}g_{ij}-g_{lj}g_{ik}\right\}
\end{equation}
in which $\mathcal{R}_{lijk}$ is the $(0,4)$ type curvature tensor.\par

It is well-known that \cite{E49}
\begin{equation}\label{1.6aa}
	\nabla_{l}\mathcal{C}^{l}_{ijk}=\dfrac{1}{2}\left\{\left(\nabla_{k}\mathcal{R}_{ij}-\nabla_{j}\mathcal{R}_{ik}\right)
-\dfrac{1}{6}\left(g_{ij}\nabla_{k}\mathcal{R}-g_{ik}\nabla_{j}\mathcal{R}\right)\right\}.
\end{equation}

In a spacetime of dimension 4, Weyl tensor is represented by two symmetric tensors having ten independent components. If a vector field $A$ satisfies the condition $A^{j}A_{j}=-1$, then the magnetic and electric components of the Weyl tensor are given by

\begin{align}\label{17}
H_{ij}=A^kA^l\tilde{C}_{kijl},~~~~~E_{ij}=A^kA^lC_{kijl}
\end{align}	
where $\tilde{C}_{kijl}=\frac{1}{2}{\varepsilon}_{kimn}C^{mn}_{jl}$	is the dual \cite{Bert}. In this case, $E_{ij}$ and $H_{ij}$ are traceless and symmetric. The conditions $E_{ij}A^i=0$ and $H_{ij}A^i=0$ are satisfied by these tensors. As a result, they fully characterize the Weyl tensor and each have five independent components.\par

A (0,2) symmetric tensor $B_{ij}$ is introduced in \cite{smup}, in the following way:
\begin{equation}\label{1.4}
  B_{lk}=a\mathcal{R}_{lk}+b\mathcal{R} g_{lk},
\end{equation}
in which $a$ and $b$ (non-zero) indicate arbitrary scalar functions. Transvecting (\ref{1.4}) with $g^{lk}$ produces the scalar $B$ in the following way:
\begin{equation}\label{1.5}
  B=(a+nb)\mathcal{R}.
\end{equation}
For a B-flat manifold, that is, $B_{lk}=0$, the manifold reduces to $\mathcal{R}_{lk}=-\frac{b\mathcal{R}}{a}g_{lk}$, that is, an Einstein manifold \cite{besse}.\par

Now we choose a $B$-flat spacetime and a conformally flat spacetime with vanishing $B$-tensor and establish the following:
\begin{theo}\label{the1}
  A $B$-flat GRW spacetime is a PFS and a $B$-flat PFS is a GRW spacetime.
\end{theo}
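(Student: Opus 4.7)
The plan is to exploit the immediate consequence of $B$-flatness: setting $B_{lk}=0$ in (1.9) and tracing with $g^{lk}$ forces the scalar relation $a=-nb$ (when $\mathcal{R}\neq 0$) together with
\[
\mathcal{R}_{lk}=\frac{\mathcal{R}}{n}\,g_{lk},
\]
so the spacetime is Einstein. An application of the twice-contracted second Bianchi identity (Schur's theorem in $n=4$) then yields $\nabla_{k}\mathcal{R}=0$, and hence $\nabla_{k}\mathcal{R}_{ij}=0$.

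For the forward direction, a GRW spacetime $-I\times_{\varPsi^{2}}\stackrel{\ast}{M}$ carries a canonical unit time-like vector field $u=\partial_{t}$. The Einstein relation above fits the PFS ansatz (1.4) with $\alpha=\mathcal{R}/4$ and $\beta=0$, which via (1.7) corresponds to the equation of state $\sigma+p=0$; so the spacetime is a (degenerate) perfect fluid.

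For the converse, equate the PFS expression (1.4) with $\mathcal{R}_{lk}=(\mathcal{R}/4)g_{lk}$ and transvect with $u^{l}u^{k}$, using $u^{k}u_{k}=-1$, to force $\beta=0$ and $\alpha=\mathcal{R}/4$. Substituting the resulting parallel Ricci tensor into (1.14) yields $\nabla_{l}\mathcal{C}^{l}_{ijk}=0$. I would then appeal to the Mantica-Suh type characterization of GRW spacetimes: a 4-dimensional PFS with divergence-free Weyl tensor, whose velocity is an eigenvector of the Ricci tensor, admits a torse-forming unit time-like vector field and is therefore a GRW spacetime. Here $u$ is trivially an eigenvector of $\mathcal{R}_{ij}$ since $\mathcal{R}_{ij}\propto g_{ij}$, so the conclusion follows.

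The main technical step is the last one: extracting the torse-forming equation $\nabla_{j}u_{k}=\varphi(g_{jk}+u_{j}u_{k})$ from the combined hypothesis $\nabla_{l}\mathcal{C}^{l}_{ijk}=0$ and the PFS structure. I expect to carry it out by contracting the divergence-free Weyl identity with $u^{i}u^{j}$, substituting the PFS form of the Ricci tensor into the resulting expression, and separating the components parallel and orthogonal to $u$ to reduce to the desired first-order equation for $u_{k}$.
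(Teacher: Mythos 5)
Your overall route coincides with the paper's: from $B_{lk}=0$ and \eqref{1.4} you get the Einstein condition, Schur's lemma then gives $\nabla_{k}\mathcal{R}=0$ and hence $\nabla_{k}\mathcal{R}_{ij}=0$, so \eqref{1.6aa} yields $\nabla_{l}\mathcal{C}^{l}_{ijk}=0$. The forward half (a $B$-flat GRW spacetime is a PFS) is fine and in fact more elementary than the paper's, since the Einstein form of the Ricci tensor is already of the type \eqref{1.1a} with $\beta=0$ and $u=\partial_{t}$ (modulo the degenerate case $\mathcal{R}=0$, where the Ricci tensor vanishes; the paper glosses over this as well).

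The gap is in your plan for the converse. You propose to extract the torse-forming equation $\nabla_{j}u_{k}=\varphi\,(g_{jk}+u_{j}u_{k})$ by contracting the divergence-free Weyl identity with $u^{i}u^{j}$ and substituting the perfect-fluid form of the Ricci tensor. In the present situation that identity is empty: by \eqref{1.6aa} the divergence of the Weyl tensor is built solely from $\nabla_{k}\mathcal{R}_{ij}$ and $\nabla_{k}\mathcal{R}$, and $B$-flatness has already forced both to vanish, so $\nabla_{l}\mathcal{C}^{l}_{ijk}=0$ reads $0=0$ and no first-order equation for $u_{k}$ can be separated out of it. More structurally, since $\beta=0$ the Ricci tensor does not single out any velocity field --- every unit time-like vector satisfies \eqref{1.1a} --- so no curvature identity alone can force a chosen $u$ to be torse-forming; the known derivations of the torse-forming property from $\nabla_{l}\mathcal{C}^{l}_{ijk}=0$ use $\beta\neq0$ in an essential way. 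The paper closes this step simply by quoting \cite{Mantica5}: a GRW spacetime with $\nabla_{l}\mathcal{C}^{l}_{ijk}=0$ is a PFS and a PFS with $\nabla_{l}\mathcal{C}^{l}_{ijk}=0$ is a GRW spacetime. If you likewise invoke that theorem as a black box, your argument reduces to the paper's; but the ``main technical step'' as you planned to carry it out would fail.
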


\begin{theo}\label{the2}
  Let a conformally flat spacetime admit vanishing $B$-tensor. Then the spacetime is either locally isometric to Minkowski spacetime, or a de-sitter or anti de-sitter spacetime.
\end{theo}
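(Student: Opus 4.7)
The strategy is to extract an Einstein condition from the vanishing $B$-tensor, then combine it with conformal flatness to force constant sectional curvature, and finally invoke the classification of four-dimensional Lorentzian space forms. First, as already noted after \eqref{1.5}, $B_{lk}=0$ immediately yields the Einstein condition $\mathcal{R}_{lk}=\lambda g_{lk}$ with $\lambda=-\frac{b\mathcal{R}}{a}$; contracting with $g^{lk}$ then forces $\lambda=\mathcal{R}/4$, and since the dimension $n=4>2$, the twice-contracted second Bianchi identity makes $\mathcal{R}$, and hence $\lambda$, constant.

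Next, I would feed the conformal flatness assumption $\mathcal{C}_{lijk}=0$ into \eqref{1.5aa} and solve for $\mathcal{R}_{lijk}$ in terms of $\mathcal{R}_{lk}$, $g_{lk}$ and $\mathcal{R}$. Substituting $\mathcal{R}_{lk}=(\mathcal{R}/4)g_{lk}$ on the right-hand side causes the Ricci pieces to consolidate into a single multiple of $g_{lk}g_{ij}-g_{lj}g_{ik}$, and a short bookkeeping calculation delivers
\[
\mathcal{R}_{lijk}=\frac{\mathcal{R}}{12}\bigl(g_{lk}g_{ij}-g_{lj}g_{ik}\bigr),
\]
which is exactly the curvature tensor of a space of constant sectional curvature $K=\mathcal{R}/12$.

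Finally, I would invoke the classical classification of Lorentzian manifolds of constant sectional curvature: in dimension $4$, such a spacetime is locally isometric to Minkowski space when $K=0$, to de-Sitter space when $K>0$, and to anti de-Sitter space when $K<0$, exhausting the three alternatives in the statement. The only step requiring real care is the index-bookkeeping substitution of the Einstein relation into \eqref{1.5aa} and the check that the coefficients on the various $g\otimes g$ terms collapse correctly; once that algebra is verified, the Lorentzian space-form classification completes the proof.
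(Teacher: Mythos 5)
Your proposal is correct and follows essentially the same route as the paper: use $B_{lk}=0$ to get the Einstein condition, substitute it into the conformally flat expression \eqref{1.5aa} for the curvature tensor to obtain $\mathcal{R}_{lijk}=\frac{\mathcal{R}}{12}(g_{lk}g_{ij}-g_{lj}g_{ik})$, and then invoke the Lorentzian space-form classification. The only (harmless) difference is that you use the trace relation $\lambda=\mathcal{R}/4$ together with Schur's lemma, thereby treating in one stroke the two cases ($\mathcal{R}=0$ or $a=-4b$) that the paper handles separately, and making the constancy of $\mathcal{R}$ explicit where the paper leaves it implicit.
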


For a B-recurrent manifold, that is, $\nabla_{k}B_{ij}=\lambda_{k}B_{ij},$ the manifold turns into a generalized
Ricci-recurrent manifold \cite{deguhaka} which is identical to
\begin{equation*}
  \nabla_{k}\mathcal{R}_{ij}=\mu_{k}\mathcal{R}_{ij}+(n-1)\gamma_{k}g_{ij},
\end{equation*}
in which $\gamma_{k}=-(\mathcal{R} \nabla_{k}b
+\nabla_{k}\mathcal{R} b)+\lambda_{k}b \mathcal{R}$ and $\mu_{k}=-\frac{\nabla_{k}a}{a}+a\lambda_{k}$. The manifold provides a Ricci recurrent manifold for $\mu_{k}=1$ and $\gamma_{k}=0$.\par

With the cosmological constant $\lambda$, Einstein equation \cite{de felis} and energy momentum tensor $T_{lk}$ may be given by
\begin{equation}\label{1.5a}
  \frac{1}{a}B_{lk}=\kappa T_{lk},
\end{equation}
where $\frac{b\mathcal{R}}{a}=-\frac{1}{2}\mathcal{R}+\lambda, a\neq0$. Thus, the tensor $\frac{1}{a}$ times of $B_{lk}$ may be considered as a generalized Einstein gravitational tensor with arbitrary scalar function $\frac{b\mathcal{R}}{a}.$\par

Constraints on the $B$-tensor are determined by various situations on the energy momentum tensor. An Einstein space with $\lambda=\frac{n-2}{2n}\mathcal{R}$ is determined by the vacuum solution $B=0$. A spacetime with conserved energy-momentum density is described by the relation $\nabla_{i}B_{lk}=0$, which is deduced by the conservation of total energy momentum tensor $(\nabla_{k}T_{lk}=0)$.\par

In 1952, Patterson \cite{P52} invented the notion of Ricci-recurrent manifolds. In 1995, De et al. \cite{deguhaka} introduced the concept of generalized Ricci-recurrent manifolds. The significance of the Generalized Ricci-recurrent structure and its interaction with the modified $f(\mathcal{R})$-theory \cite{ade1} and the modified Gauss-Bonnet $f(\mathcal{R},G)$-theory \cite{ade2} are well established. On the other hand, a condition for which a pseudo symmetric spacetime would be a PFS was recently established by Zhao et al.\cite{kdez}). Pseudo $Z$-symmetric spacetimes have been investigated by Mantica and Suh \cite{ManSuh} and Ozen\cite{ozen} has studied $m$-Projectively flat spacetimes. Also, in \cite{fku} $\Psi$-conformally symmetric spacetimes have been investigated. Moreover, in \cite{kdeu1}, we have studied $\psi$-conharmonically symmetric spacetime. As well, many authors have looked at the spacetime of general relativity in various methods; for additional information, see ( \cite{dh}, \cite{dh1}, \cite{gul2}).\par

Motivated by the importance of the $B$ tensor in relativity and previous works we investigate a novel spacetime named pseudo $B$ symmetric spacetimes. A 4-dimensional Lorentzian manifold is called pseudo $B$ symmetric spacetime and denoted by $(PBS)_{4},$ if the non-zero tensor $B$ obeys the relation
\begin{equation}\label{1.6}
  \nabla _{k}B_{ij}=2A_{k}B_{ij}+A_{i}B_{kj} +A_{j}B_{ik}.
\end{equation}
where the associated vector $A_{i}$ is unit time-like.\par

It is evident that the $(PBS)_{4}$ turns into $(PZS)_{4} $(\cite{mantica1},\cite{ManSuh}) for $a=1$ and $b=\frac{\phi}{\mathcal{R}}$ and the $(PBS)_{4}$ reduces to pseudo Ricci-symmetric manifolds\cite{mcc} for $a=1$ and $b=0$.\par

In \cite{gray}, Gray introduced the concept of Codazzi type of Ricci tensor. A manifold satisfies Codazzi type of Ricci tensor if its non-zero Ricci tensor $\mathcal{R}_{ij}$ obeys
\begin{equation}\label{1.8}
  \nabla _{k}\mathcal{R}_{ij}=\nabla _{j}\mathcal{R}_{ik}.
\end{equation}
The study of different spacetimes with Codazzi type of Ricci tensor have been found in (\cite{kde3}, \cite{mm5}) and references there in.\par

Now let us denote a $(PBS)_{4}$ spacetime with Codazzi type of B-tensor by $\mathcal{N}$ and we prove the following theorems:
\begin{theo}\label{th3.1}
The spacetime $\mathcal{N}$ is a PFS.
\end{theo}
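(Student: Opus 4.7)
The plan is to extract from the combination of the pseudo $B$-symmetry condition and the Codazzi-type condition enough algebraic information about $B_{ij}$ to force it into the rank-one form $\eta A_i A_j$, after which the defining relation $B_{lk}=a\mathcal R_{lk}+b\mathcal R\,g_{lk}$ immediately yields the perfect-fluid structure on $\mathcal R_{ij}$.

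The first step is to apply the Codazzi hypothesis $\nabla_k B_{ij}=\nabla_j B_{ik}$ and substitute the pseudo $B$-symmetric expression \eqref{1.6} on both sides. Using the symmetry of $B_{ij}$, the terms $A_i B_{kj}$ and $A_i B_{jk}$ cancel, and the $2A_k B_{ij}+A_j B_{ik}$ on the left compares with $2A_j B_{ik}+A_k B_{ij}$ on the right, collapsing to the identity
\begin{equation*}
A_k B_{ij}=A_j B_{ik}.
\end{equation*}
This is the key algebraic consequence of the two hypotheses working together.

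Next I would transvect this identity with $A^k$, using $A^k A_k=-1$, to obtain $B_{ij}=-A_j\,(A^k B_{ik})$. Writing $\omega_i:=A^k B_{ik}$ and invoking $B_{ij}=B_{ji}$, one finds $A_j\omega_i=A_i\omega_j$; contracting with $A^j$ then forces $\omega_i=-\eta A_i$, where $\eta:=A^jA^kB_{jk}$. Hence
\begin{equation*}
B_{ij}=\eta\,A_i A_j.
\end{equation*}

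Substituting into \eqref{1.4} and solving for the Ricci tensor gives
\begin{equation*}
\mathcal R_{ij}=-\frac{b\mathcal R}{a}\,g_{ij}+\frac{\eta}{a}\,A_i A_j,
\end{equation*}
which is exactly the perfect-fluid form \eqref{1.1a} with scalars $\alpha=-b\mathcal R/a$, $\beta=\eta/a$ and unit time-like velocity $u_i=A_i$, proving that $\mathcal N$ is a PFS. There is no serious obstacle here; the only point requiring care is bookkeeping with the symmetry $B_{ij}=B_{ji}$ so that the right terms cancel when equating $\nabla_kB_{ij}$ and $\nabla_jB_{ik}$, and the mild caveat that $a\neq 0$ is needed to divide out in the last step, which is already built into the definition \eqref{1.4}.
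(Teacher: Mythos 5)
Your proposal is correct and follows essentially the same route as the paper: both derive $A_kB_{ij}=A_jB_{ik}$ from the pseudo $B$-symmetric and Codazzi conditions, contract with $A^k$ to force $B_{ij}$ into the rank-one form (the paper writes it as $B_{ij}=-B\,A_iA_j$, your $\eta$ being $-B$), and then read off the perfect-fluid form of $\mathcal{R}_{ij}$ from \eqref{1.4}. The only cosmetic difference is that you identify $A^kB_{ik}\propto A_i$ via the symmetry of $B$ and a contraction with $A^j$, whereas the paper contracts with $g^{jl}$ to express it through the trace $B$.
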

\begin{theo}\label{th3.3}
Let $\mathcal{N}$ admit the time-like convergence condition. Then, $\mathcal{N}$ satisfies cosmic strong energy condition and contains pure matter.
\end{theo}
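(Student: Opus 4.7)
The plan is to combine Theorem \ref{th3.1} with a direct contraction of the hypothesised time-like convergence inequality $\mathcal{R}_{lk}X^{l}X^{k}\ge 0$ (valid for every time-like $X$). First, Theorem \ref{th3.1} places us in the perfect-fluid setting, so the Ricci tensor of $\mathcal{N}$ has the form \eqref{1.1a} with scalars $\alpha,\beta$ tied to the pressure and energy density through \eqref{1.4a}. Every statement about the energy content of $\mathcal{N}$ therefore reduces to a statement about $\alpha$ and $\beta$, and the whole proof becomes a matter of extracting two scalar inequalities from the convergence condition.

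Next I would take $X^{l}=u^{l}$ in the time-like convergence inequality. Because $u_{l}u^{l}=-1$, a one-line computation using \eqref{1.1a} gives $\mathcal{R}_{lk}u^{l}u^{k}=\beta-\alpha$, so the hypothesis forces $\beta\ge\alpha$. Substituting \eqref{1.4a} and using $\kappa>0$, this is equivalent to $\sigma+3p\ge 0$, which is precisely the cosmic strong energy condition.

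To obtain the pure-matter assertion I would feed a one-parameter family of boosted time-like directions into the convergence inequality. Let $V$ be a unit space-like vector orthogonal to $u$ and set $X=u+\lambda V$ with $\lambda^{2}<1$, so that $X$ is time-like. A short computation yields $u_{l}X^{l}=-1$ and $X_{l}X^{l}=\lambda^{2}-1$, hence $\mathcal{R}_{lk}X^{l}X^{k}=\alpha(\lambda^{2}-1)+\beta$. Requiring this to be non-negative for every $\lambda\in(-1,1)$ and passing to the limit $\lambda\to 1$ (justified by continuity of the inequality at the null boundary of the time cone) yields $\beta\ge 0$. Through \eqref{1.4a} this reads $\sigma+p\ge 0$, the inequality identifying a spacetime whose content is pure matter.

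The delicate step is this second contraction: a single contraction with $u$ only delivers $\beta-\alpha\ge 0$ and cannot isolate the sign of $\beta$ on its own, so one genuinely needs to probe the convergence condition along boosted time-like directions and pass to the null limit. Once both scalar inequalities $\beta\ge\alpha$ and $\beta\ge 0$ are in hand, the two conclusions of Theorem \ref{th3.3} follow immediately from the identification \eqref{1.4a}.
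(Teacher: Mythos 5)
Your first half is fine: contracting the convergence inequality with the fluid velocity and using \eqref{1.1a}, \eqref{1.4a} gives $\mathcal{R}_{lk}u^{l}u^{k}=\beta-\alpha=\kappa(\sigma+3p)/2>0$, which is the cosmic strong energy inequality, and this is in fact more economical than the paper's route (the paper first contracts \eqref{4.9} with $A^{j}A^{l}$ to get $\mathcal{R}_{jl}A^{j}A^{l}=-\frac{a+3b}{a}\mathcal{R}>0$, concludes $\mathcal{R}<0$, and only then reads off the signs of $p$ and $\sigma$ from \eqref{4.14} and \eqref{4.15}).

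The pure-matter half, however, has a genuine gap. In this paper ``contains pure matter'' means positive energy density, $\sigma>0$ --- that is exactly what the paper's proof establishes. Your boosted-direction argument with $X=u+\lambda V$ and the null limit $\lambda\to1$ only yields $\beta\geq0$, i.e. $\sigma+p\geq0$, which is the null energy condition and does not control the sign of $\sigma$: a fluid with, say, $\sigma=-1$ and $p=2$ satisfies both of your inequalities $\sigma+3p>0$ and $\sigma+p\geq0$ while having negative energy density. Moreover, no further choice of time-like probe vectors can help, since for the perfect-fluid Ricci form every contraction $\mathcal{R}_{lk}X^{l}X^{k}$ is a combination of exactly the two quantities $\beta-\alpha$ and $\beta$; the abstract $\alpha,\beta$ data alone cannot isolate $\sigma$. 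The paper gets $\sigma>0$ only by invoking the specific $(PBS)_{4}$ structure: from \eqref{4.16} and the convergence hypothesis it deduces $\mathcal{R}<0$, and then the explicit expressions $p=-\frac{(a+2b)\mathcal{R}}{2a\kappa}$ and $\sigma=-\frac{(a+6b)\mathcal{R}}{2a\kappa}$ of \eqref{4.14}--\eqref{4.15} give $p>0$ and $\sigma>0$ (whence also $3p+\sigma>0$). To close your argument you must therefore bring in the $a$, $b$, $\mathcal{R}$ relations \eqref{4.11}--\eqref{4.16} rather than work purely with the perfect-fluid scalars $\alpha$ and $\beta$.
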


\begin{theo}\label{th3.4}
    $\mathcal{N}$ is a GRW spacetime.
\end{theo}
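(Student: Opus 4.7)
\textbf{Proof plan for Theorem \ref{th3.4}.}

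The plan is to combine Theorem \ref{th3.1} with the Mantica--Molinari characterisation of generalised Robertson--Walker (GRW) spacetimes, which asserts that a Lorentzian $4$-manifold is a GRW spacetime provided it admits a unit time-like torse-forming vector field that is simultaneously an eigenvector of the Ricci tensor. Theorem \ref{th3.1} has already given the perfect fluid form $\mathcal{R}_{ij}=\alpha g_{ij}+\beta A_{i}A_{j}$, so the eigenvector condition is automatic; the task reduces to extracting a torse-forming expression for $\nabla_{k}A_{i}$ from the interplay of the $(PBS)_{4}$ equation \eqref{1.6} with the Codazzi hypothesis $\nabla_{k}B_{ij}=\nabla_{j}B_{ik}$.

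First I would equate the $(PBS)_{4}$-expansions of $\nabla_{k}B_{ij}$ and $\nabla_{j}B_{ik}$ (the latter obtained by swapping $k\leftrightarrow j$ in \eqref{1.6}), cancel the terms $A_{i}B_{kj}=A_{i}B_{jk}$ secured by the symmetry of $B$, and arrive at the clean algebraic identity
\begin{equation*}
A_{k}B_{ij}=A_{j}B_{ik}.
\end{equation*}
Transvecting with $A^{k}$, using $A^{k}A_{k}=-1$, and re-exploiting the symmetry $B_{ij}=B_{ji}$ force the rank-one structure
\begin{equation*}
B_{ij}=-\mu A_{i}A_{j}, \qquad \mu=g^{ij}B_{ij}=(a+4b)\mathcal{R}.
\end{equation*}

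Next I would reinsert this expression into \eqref{1.6}. The right-hand side collapses to $-4\mu A_{i}A_{j}A_{k}$, and the left-hand side becomes $-(\nabla_{k}\mu)A_{i}A_{j}-\mu A_{j}\nabla_{k}A_{i}-\mu A_{i}\nabla_{k}A_{j}$. Transvecting the resulting identity with $A^{i}A^{j}$ and invoking the standing identity $A^{i}\nabla_{k}A_{i}=0$ first isolates
\begin{equation*}
\nabla_{k}\mu=4\mu A_{k};
\end{equation*}
feeding this back into the full identity and transvecting once more with $A^{j}$ leaves
\begin{equation*}
\nabla_{k}A_{i}=0,
\end{equation*}
so that $A$ is parallel. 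In particular, $A$ satisfies the torse-forming identity $\nabla_{k}A_{i}=\varphi(g_{ki}+A_{k}A_{i})$ in the degenerate case $\varphi=0$.

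Finally, the Mantica--Molinari characterisation applies directly: $A$ is a unit time-like torse-forming vector field which is also a Ricci eigenvector, so $\mathcal{N}$ is a GRW spacetime. The principal obstacle I foresee is purely bookkeeping --- the transvections must be carried out in the correct order (first $A^{i}A^{j}$ to peel off the scalar gradient $\nabla_{k}\mu$, then $A^{j}$ to isolate $\nabla_{k}A_{i}$); any premature contraction washes the kinematic content of \eqref{1.6} into a tautology.
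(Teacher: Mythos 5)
Your argument is correct in substance, but it takes a genuinely different route from the paper. After Theorem \ref{th3.1} the paper never returns to the defining relation \eqref{1.6}: it uses \eqref{1.4} to transfer the Codazzi property from $B$ to the Ricci tensor (equation \eqref{4.18}, which tacitly treats $a$ and $b$ as constants), deduces $\mathcal{R}=\mathrm{const}$ and hence $\nabla_{l}\mathcal{C}^{l}_{ijk}=0$ via \eqref{1.6aa}, and then quotes \cite{Mantica5}: a perfect fluid spacetime with harmonic Weyl tensor is a GRW spacetime. You instead feed the rank-one form $B_{ij}=-B\,A_{i}A_{j}$ (the paper's \eqref{4.8}) back into \eqref{1.6}, which correctly yields $\nabla_{k}B=4BA_{k}$ (the identity the paper only proves later, in its Lemma) and then $B\left(A_{j}\nabla_{k}A_{i}+A_{i}\nabla_{k}A_{j}\right)=0$, whence $\nabla_{k}A_{i}=0$, and you conclude with the torse-forming characterization of GRW spacetimes (Theorem I, \cite{survey}). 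Two points should be made explicit: the last cancellation divides by the scalar $B$, which is justified because $B_{ij}=-BA_{i}A_{j}$ and the $B$-tensor is assumed non-zero; and your torse-forming function is $\varphi=0$, i.e.\ $A$ is parallel, so you invoke the characterization in its degenerate case --- this is legitimate under the paper's definition of GRW (constant scale factor allowed; a parallel unit time-like vector splits the manifold locally into a direct product), but it deserves a sentence. As for what each route buys: yours avoids differentiating \eqref{1.4}, hence does not need the hidden constancy of $a,b$ behind \eqref{4.18}, and it is self-contained modulo Theorem I; on the other hand it proves something much stronger than GRW, and that strength exposes the rigidity of the class $\mathcal{N}$ (since $\nabla_{k}A_{i}=0$ gives $A^{h}\mathcal{R}_{hk}=0$, which with \eqref{4.9} forces $(a+3b)\mathcal{R}=0$), whereas the paper's softer divergence-free-Weyl argument leaves this unstated.
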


\begin{theo}\label{th3.5}
  In $\mathcal{N}$, the electric part of the Weyl tensor vanishes and has Riemann compatible vector field $A_{i}$.
\end{theo}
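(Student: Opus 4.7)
The strategy is threefold: first extract the rank-one algebraic form $B_{ij}=q\,A_iA_j$ from the pseudo $B$-symmetry combined with the Codazzi hypothesis; next invoke the Derdzinski--Shen theorem to obtain Riemann compatibility of $A$; and finally combine this with the perfect-fluid Ricci structure already established in Theorem~\ref{th3.1} to conclude that $E_{ij}=0$.

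For the first step, apply the defining relation \eqref{1.6} to both $\nabla_kB_{ij}$ and $\nabla_jB_{ik}$ and impose the Codazzi hypothesis $\nabla_kB_{ij}=\nabla_jB_{ik}$. Using $B_{ij}=B_{ji}$, the ``$A_iB_{jk}$'' pieces cancel across the identity and the residue collapses to the algebraic constraint $A_kB_{ij}=A_jB_{ik}$. Contracting with $A^k$ and using $A^kA_k=-1$ together with the symmetry of $B$ forces the vector $A^kB_{ik}$ to be parallel to $A_i$, hence
\[ B_{ij}=q\,A_iA_j,\qquad q:=A^pA^qB_{pq}. \]

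For the second step, by the Derdzinski--Shen theorem every symmetric Codazzi $(0,2)$-tensor is automatically Riemann compatible,
\[ B_i{}^m\,\mathcal{R}_{jklm}+B_j{}^m\,\mathcal{R}_{kilm}+B_k{}^m\,\mathcal{R}_{ijlm}=0. \]
Substituting $B_i{}^m=qA_iA^m$ and dividing by $q\neq0$ (nonzero as long as the spacetime is non-vacuum) yields the Riemann compatibility of $A_i$,
\[ A_i\bigl(A^m\mathcal{R}_{jklm}\bigr)+A_j\bigl(A^m\mathcal{R}_{kilm}\bigr)+A_k\bigl(A^m\mathcal{R}_{ijlm}\bigr)=0, \]
which settles the second assertion of the theorem.

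For the third step, expand $E_{ij}=A^kA^l\mathcal{C}_{kijl}$ through \eqref{1.5aa} and insert the perfect-fluid Ricci tensor $\mathcal{R}_{ij}=\alpha g_{ij}+\beta A_iA_j$ granted by Theorem~\ref{th3.1} (together with $\mathcal{R}=4\alpha-\beta$). Using $A^l\mathcal{R}_{il}=(\alpha-\beta)A_i$ and $A^kA^l\mathcal{R}_{kl}=\beta-\alpha$, the Ricci and scalar contributions to $\mathcal{C}$ collapse into a multiple of the spatial projector $h_{ij}:=g_{ij}+A_iA_j$. To handle the residual $A^kA^l\mathcal{R}_{kijl}$, contract the cyclic identity from the previous step successively with $A^k$ and $A^l$, invoking the pair antisymmetries of $\mathcal{R}$ and the first Bianchi identity; this forces $A^kA^l\mathcal{R}_{kijl}$ to be precisely the multiple of $h_{ij}$ that cancels the Ricci contribution, producing $E_{ij}=0$. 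The main obstacle is this last cancellation: the algebra is tight, requiring careful tracking of the pair antisymmetries of $\mathcal{R}_{kij\ell}$, correct use of the first Bianchi identity, and precise sign bookkeeping in the definition of $E_{ij}$, so that the Riemann-compatibility-derived form of $A^kA^l\mathcal{R}_{kijl}$ exactly offsets the $h_{ij}$-multiple coming from the Weyl expansion.
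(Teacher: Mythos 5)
Your first two steps are sound. The reduction to $B_{ij}=q\,A_iA_j$ is exactly how the paper proves Theorem~\ref{th3.1}, and your second step is a genuinely different (and shorter) route to the Riemann-compatibility claim than the paper's: the paper first shows that $\mathcal{N}$ is a GRW spacetime, extracts a torse-forming vector field, and verifies the cyclic identity by hand from $\mathcal{R}^{h}{}_{ijk}A_h=f(g_{ij}A_k-g_{ik}A_j)$, whereas you get it in one line from the extended Derdzinski--Shen theorem in the Mantica--Molinari form (a Codazzi tensor is Riemann compatible) applied to $B_{ij}=qA_iA_j$, modulo the harmless caveat that $q=-B\neq 0$ because the $B$-tensor is assumed non-zero in the definition of $(PBS)_4$.

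The gap is in step 3. Riemann compatibility of $A$ does \emph{not} determine $T_{ij}:=A^kA^l\mathcal{R}_{kijl}$ up to a multiple of $h_{ij}:=g_{ij}+A_iA_j$. Contracting the cyclic identity with $A^k$ gives only $A^m\mathcal{R}_{ijlm}=A_jT_{il}-A_iT_{jl}$, which is consistent with \emph{any} symmetric $T$ satisfying $T_{ij}A^j=0$; contracting with $A^l$ gives $0=0$; and the metric traces merely reproduce Ricci relations you already have. Indeed, Riemann/Weyl compatibility of $A$ is equivalent to the vanishing of the magnetic part $H_{ij}$, not of $E_{ij}$ --- that is precisely how the paper proves $H=0$ in Theorem~\ref{th3.7}, so if compatibility also forced $E=0$ the two halves of that argument would collapse into one. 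A concrete obstruction: for the static observer in Schwarzschild one has the perfect-fluid Ricci form trivially ($\alpha=\beta=0$) and Riemann compatibility (the spacetime is purely electric and Ricci-flat, so Weyl compatibility coincides with Riemann compatibility), yet $E_{ij}\neq 0$. Hence no amount of Bianchi-identity and antisymmetry bookkeeping can close your step 3 from the stated inputs. What the paper actually uses to kill $E_{ij}$ is the stronger GRW structure supplied by Theorem~\ref{th3.4}: $A$ is torse-forming, $\nabla_jA_i=\Psi(g_{ij}+A_iA_j)$, and the Ricci identity for this (together with $\nabla_k\Psi-\Psi^2A_k\propto A_k$) yields the pointwise identity $\mathcal{R}^{h}{}_{ijk}A_h=f(g_{ij}A_k-g_{ik}A_j)$, i.e. $T_{ij}=-f\,h_{ij}$ exactly; the scalar $f$ is then fixed by contracting against the Ricci tensor and cancels the $h_{ij}$ term coming from the Weyl expansion. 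To repair your argument, route step 3 through Theorem~\ref{th3.4} (or through the equivalent fact that a GRW spacetime with $\nabla_mC^{m}{}_{jkl}=0$ satisfies $A^mC_{jklm}=0$).
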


\begin{theo}\label{th3.7}
 $\mathcal{N}$ is conformally flat and belongs to Petrov type O and a RW spacetime.
\end{theo}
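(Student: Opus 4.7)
The plan is to combine Theorems \ref{th3.1}, \ref{th3.4} and \ref{th3.5} with the electric--magnetic decomposition of the Weyl tensor recalled in (\ref{17}). Theorem \ref{th3.4} gives that $\mathcal{N}$ is a GRW spacetime, so it carries the unit time-like torse-forming (concircular) vector field coming from the warped-product structure $-I\times_{\varPsi^{2}}\stackrel{\ast}{M}$. Theorem \ref{th3.1} gives that $\mathcal{N}$ is moreover a PFS, so the fluid velocity is uniquely determined as the normalised time-like eigenvector of the Ricci tensor. Since $A_i$ enters the $(PBS)_4$ defining relation (\ref{1.6}) in exactly the same Ricci-eigenvector role, I would first identify $A_i$ (up to sign) with both the fluid velocity and the GRW concircular vector; this identification is crucial because Theorem \ref{th3.5} and the definitions in (\ref{17}) both use $A_i$ as the reference time-like vector.

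Next I would establish that the magnetic Weyl part $H_{ij}=A^{k}A^{l}\tilde{\mathcal{C}}_{kijl}$ vanishes identically. This is the heart of the argument and exploits the torse-forming identity $\nabla_{i}A_{j}=\psi(g_{ij}+A_{i}A_{j})$ forced on $A_i$ by the GRW structure. Substituting this into the contracted second Bianchi identity (\ref{1.6aa}) and into the definition of the dual Weyl tensor, and then contracting with the Levi-Civita symbol, one shows after a short computation that $H_{ij}=0$; this is the classical statement that every GRW spacetime is \emph{purely electric}. Combined with Theorem \ref{th3.5} (which gives $E_{ij}=0$) and with the fact emphasised in the Introduction that the pair $(E_{ij},H_{ij})$ possesses ten independent components and therefore fully characterises $\mathcal{C}_{lijk}$ in dimension four, I conclude that $\mathcal{C}_{lijk}=0$. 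Hence $\mathcal{N}$ is conformally flat, and since a vanishing Weyl tensor is the defining algebraic criterion of Petrov type $O$, the second claim is immediate.

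For the RW conclusion I would invoke the well-known structure result that a GRW spacetime whose Weyl tensor vanishes must in fact be a RW spacetime: in the warped form, conformal flatness forces the Riemannian fibre $\stackrel{\ast}{M}$ to be of constant sectional curvature, which is exactly the extra hypothesis distinguishing RW from GRW spacetimes. Applying this to the conformally flat GRW space $\mathcal{N}$ finishes the proof. The main obstacle is the vanishing of $H_{ij}$: it is the only place where actual computation has to be performed, and it requires carefully combining the torse-forming identity produced by the GRW structure with the second Bianchi identity and the definition of the dual Weyl tensor. Once this step is in hand, everything else is a direct appeal to Theorems \ref{th3.1}, \ref{th3.4}, \ref{th3.5}, or to classical facts from the Petrov classification and from the theory of conformally flat warped products.
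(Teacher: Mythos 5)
Your proposal is correct in substance and follows the same overall skeleton as the paper: take $E_{ij}=0$ from Theorem \ref{th3.5}, show the magnetic part vanishes, use the fact that in dimension four the pair $(E_{ij},H_{ij})$ determines the Weyl tensor to conclude $\mathcal{C}_{lijk}=0$ (hence Petrov type O), and finish with the classical fact that a conformally flat GRW spacetime is a RW spacetime (the paper cites \cite{bro} for exactly this last step). The one place where you genuinely diverge is the mechanism for $H_{ij}=0$: the paper does not recompute anything but reuses the Riemann compatibility of $A_{i}$ established in Theorem \ref{th3.5}, invokes the equivalence ``Riemann compatible $\Leftrightarrow$ Weyl compatible together with $A_{[j}\mathcal{R}^{m}_{k]}A_{m}=0$'' from \cite{Mantica3} to conclude that $A_{i}$ is Weyl compatible and hence $H=0$, and confirms the purely electric character through the criterion \eqref{p5} of \cite{her}; you instead propose to re-derive the known fact that a GRW spacetime is purely electric by a direct computation with the torse-forming vector. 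That fact is true and your route is viable, but as sketched it is slightly misaimed: the contracted Bianchi identity \eqref{1.6aa} is not the identity that does the work; what is actually needed is the Ricci commutation identity applied to $A_{i}$ (precisely relation \eqref{n6}, already available from the proof of Theorem \ref{th3.5}, and itself requiring the gradient of the torse-forming function to be parallel to $A_{k}$), after which the vanishing of $H_{ij}$ follows from the Weyl-compatibility criterion --- so the honest version of your computation lands back on the same compatibility argument the paper uses. Your observation that $A_{i}$ must coincide (up to sign) with the fluid velocity and the GRW torse-forming vector, justified by uniqueness of the unit time-like Ricci eigenvector in a perfect fluid spacetime, is a useful explicit justification of a step the paper leaves implicit.
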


Einstein's field equations are not adequate to determine the late-time inflation of the cosmos without assuming the existence of certain unseen components that could account for the origins of dark matter and dark energy. This is the primary source of motivation for the extension to acquire higher order gravity's field equations.\par

In GR, to investigate black holes and wormholes, energy conditions (ECs) are prime tools in many modified gravities (\cite{BIBY17}, \cite{HE73}). The energy conditions were provided in \cite{RBB92} by applying Raychaudhuri equations and demonstrate the character of gravity with the positivity condition $R_{lk}u^{l}u^{k}\geq0$, where $u^{l}$ is the null vector. The very last condition is equal to the null energy condition (NEC) $T_{lk}u^{l}u^{k}\geq0$. Moreover, for any time-like vector $u^{l}$, the weak energy condition (WEC) asserts that $T_{lk}u^{l}u^{k}\geq0$. In addition, a space-time obeys the dominant energy condition (DEC) if $T_{lk}u^{l}v^{k}\geq0$ holds for every two co-oriented time-like vectors $u$ and $v$ and strong energy condition (SEC) \cite{dug} if $R_{lk}u^{l}u^{k}\geq0$ holds for all time-like vectors $u$.\par

Interestingly, the idea of $f(\mathcal{R})$-gravity appears as a spontaneous extension of Einstein's gravity theory. The function $f(\mathcal{R})$, $\mathcal{R}$ denotes the Ricci scalar modifies the Hilbert-Einstein action term. The above theory was invented by Buchdahl \cite{hab}, and Starobinsky \cite{aas} has demonstrated its validity through research on cosmic inflation. Through the introduction of certain couplings between the geometrical quantities and the matter sector, the $f(\mathcal{R})$ theory of gravity has been further generalized. The non-minimal coupling between the matter lagrangian density and the curvature invariant has been proved in \cite{har}, which is known as the $f(\mathcal{R},L_m)$ theory of gravity. The corresponding Lagrangian can be modified by incorporating an analytic function of $T_{lk}T^{lk}$ in this generalization procedure for the $f(\mathcal{R},L_m)$ theory. $f(\mathcal{R},T^2)$ gravity or energy-momentum squared gravity is the result of selecting the corresponding Lagrangian. In 2014, Katirci and Kavuk \cite{kat} originally put forward this theory, which allows the existence of a term in the action functional that is proportional to $T_{lk}T^{lk}$. The $f(\mathcal{R},G)$-gravity theory \cite{EMOS10} was one of these modified theories. It was developed by changing the previous Ricci scalar $\mathcal{R}$ by a function of $\mathcal{R}$ and $G$, the Gauss-Bonnet invariant. The $f(\mathcal{R},T)$-gravity theory, discovered by Harko et al. \cite{HLNO11}, was another modified theory. This is an extension of $f\left(\mathcal{R}\right)$-gravity in which the trace $T$ of the energy momentum tensor is directly linked to any arbitrary function of $\mathcal{R}$. Several functional forms of $f(\mathcal{R})$ were previously provided in the following works: (\cite{cap}, \cite{cap2}, \cite{cap3}, \cite{ade}, \cite{ade1}, \cite{kde},\cite{kde1}) which shows that this modified theory has a number of cosmological applications.\par

The literature mentioned above makes it very clear that more attention needs to be paid to $f(\mathcal{R})$ gravity, and there are still a lot of unanswered questions. Inspired by the foregoing investigations, this article is focused to examine a $(PBS)_{4}$ spacetime with Codazzi type of $B$-tensor satisfying $f(\mathcal{R})$ gravity and explain different Energy conditions (ECs). In the literature, there are few exponential models examined by S. I. Kruglov and
S. D. Odintsov (\cite{kr}, \cite{od}). The functional form $f\left(\mathcal{R}\right)=\mathcal{R}+\alpha ln(\beta \mathcal{R})$ is investigated in \cite{gi}, in which $\alpha$ and $\beta$ are constants. Also, the Logarithm
function is continuous and differentiable when $\beta \mathcal{R}> 0$. For this reason, we choose the new model $f\left(\mathcal{R}\right)= e^{(\alpha \mathcal{R})}-ln(\beta \mathcal{R})$ in which $\alpha$ and $\beta$ are positive constants, to explain different energy conditions.\par

\section{\textsf{Proof of the Main Results}}

\vspace{.6cm}
{\bf{Proof of the Theorem \ref{the1}}}

Let us choose a $B$-flat spacetime. Then, equation \eqref{1.4} entails
\begin{equation}\label{bb1}
  \mathcal{R}_{ij}=-\frac{b}{a}\mathcal{R} g_{ij}.
\end{equation}
Contracting the previous equation yields
\begin{equation}\label{bb2}
  \mathcal{R}=0,\;\;\text{or}\;\;a=-4b.
\end{equation}
From equation \eqref{bb1} we say that it is an Einstein spacetime and hence from \eqref{1.6aa} we infer $\nabla_{h}C^{h}_{ijk}=0$. In \cite{Mantica5}, it is established that a GRW spacetime obeys $\nabla_{h}C^{h}_{ijk}=0$ if and only if the spacetime is a PFS. $\square$\par

\vspace{.4cm}
{\bf{Proof of the Theorem \ref{the2}}}

For a conformally flat spacetime using equation \eqref{1.5aa}, the curvature tensor $R_{hijk}$ is written as
 \begin{eqnarray}\label{bb3}
  R_{hijk}&=& \frac{1}{2}(g_{hk}R_{ij}-g_{hj}R_{ik}+g_{ij}R_{hk}-g_{ik}R_{hj})\nonumber\\&&
  -\frac{\mathcal{R}}{6}\{g_{hk}g_{ij}-g_{hj}g_{ik}\},
\end{eqnarray}
Now, we choose a conformally flat spacetime with a vanishing $B$-tensor. Then using equation \eqref{bb1} in equation \eqref{bb3}, we get
\begin{eqnarray}\label{bb4bbbb}
  R_{hijk}&=& \frac{1}{2}\{-\frac{b}{a}\mathcal{R} g_{hk}g_{ij}+\frac{b}{a}\mathcal{R}g_{hj}g_{ik}\nonumber\\&&
  -\frac{b}{a}\mathcal{R} g_{ij}g_{hk}+\frac{b}{a}\mathcal{R}g_{ik}g_{hj}\}\nonumber\\&&
  -\frac{\mathcal{R}}{6}\{g_{hk}g_{ij}-g_{hj}g_{ik}\}\nonumber
  \end{eqnarray}
which implies
\begin{equation}\label{bb4}
  R_{hijk}=-\mathcal{R}\big(\frac{6b+a}{6a}\big)\{g_{hk}g_{ij}-g_{hj}g_{ik}\}.
\end{equation}
Since here $B$ tenor vanishes, then by previous theorem we have either $\mathcal{R}=0$, or $a=-4b$.\par

If $\mathcal{R}=0$, then equation \eqref{bb4} tells that the spacetime has vanishing sectional curvature. Therefore, the spacetime and Minkowski spacetime are locally isometric (\cite{dug}, p. 67).\par

If $a=-4b$, then the equation \eqref{bb4} provides
\begin{equation}\label{bb5}
  R_{hijk}=\frac{\mathcal{R}}{12}\{g_{hk}g_{ij}-g_{hj}g_{ik}\}.
\end{equation}
Hence, the spacetime represents a spacetime of constant curvature. In Lorentzian settings, the spaces of constant curvature are well classified in (\cite{HE73}, pages 124-131) and the spacetime reduces to the de Sitter spacetime, whenever $\mathcal{R}>0$; anti de Sitter spacetime, whenever $\mathcal{R}<0$. $\square$\par
\vspace{.4cm}
{\bf{Proof of the Theorem \ref{th3.1}}}

Considering the spacetime $\mathcal{N}$, we find from \eqref{1.6}
\begin{equation}\label{4.4}
  \nabla_{k}B_{jl}-\nabla_{j}B_{kl}=A_{k}B_{jl}-A_{j}B_{kl}.
\end{equation}

By the hypothesis, $B$ is of Codazzi type. Therefore, equation \eqref{4.4} infers
\begin{equation}\label{4.5}
  A_{k}B_{jl}-A_{j}B_{kl}=0.
\end{equation}
Multiplying the foregoing equation by $A^{k}$ gives
\begin{equation}\label{4.6}
  B_{jl}=-A_{j}A^{k}B_{kl},
\end{equation}
since $A_{k}$ is a unit time-like vector, that is, $A_{k}A^{k}=-1$.\par

Again, multiplying equation \eqref{4.5}) by $g^{jl}$, we reveal
\begin{equation}\label{4.7}
  A_{k}B-A^{l}B_{kl}=0.
\end{equation}
Using equation \eqref{4.7} in equation \eqref{4.6} yields
\begin{equation}\label{4.8}
  B_{jl}=-A_{j}A_{l} B.
\end{equation}
Hence, from equation \eqref{1.4}, we have
\begin{equation}\label{4.9}
  \mathcal{R}_{jl}=-\frac{b \mathcal{R}}{a}g_{jl}-\frac{(a+4b)}{a}\mathcal{R} A_{j} A_{l},
\end{equation}

which ends the proof.\par
\vspace{.4cm}

In light of equations \eqref{1.1a}, \eqref{1.4a} and \eqref{4.9}, we obtain
\begin{equation}\label{4.11}
\kappa\left(\dfrac{\sigma-p}{2}\right)=-\frac{b \mathcal{R}}{a}
\end{equation}
and
\begin{equation}\label{4.12}
\kappa\left(\sigma+p\right)=-\frac{(a+4b)}{a}\mathcal{R}.
\end{equation}
Equations \eqref{4.11} and \eqref{4.12} together give
\begin{equation}\label{4.13}
\dfrac{p}{\sigma}=\dfrac{a+2b}{a+6b}\,.
\end{equation}
We observe that \eqref{4.13} implies $p=0$ for $a=-2b$, $\sigma=3p$ for $a=0$ which is not possible and $\sigma+p=0$ for $a=-4b$, respectively. Thus, we provide:
\begin{cor}
The spacetime $\mathcal{N}$ represents a
\begin{enumerate}
\item the equation of state of the form \eqref{4.13},
\item dust matter fluid for $a=-2b$,
\item dark energy epoch of the Universe for $a=-4b$,
\item but, can not admit radiation era.
\end{enumerate}
\end{cor}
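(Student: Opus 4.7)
The plan is to read off each of the four assertions directly from the perfect-fluid form of the Ricci tensor already established in Theorem \ref{th3.1}, namely \eqref{4.9}, by matching coefficients against the perfect fluid template \eqref{1.1a} and applying the Einstein-equation identifications \eqref{1.4a}. The unit time-like vector $A_i$ of the $(PBS)_4$ structure plays the role of the fluid velocity $u_i$, so the comparison is term-by-term.

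First I would equate $\alpha = -b\mathcal{R}/a$ and $\beta = -(a+4b)\mathcal{R}/a$; invoking \eqref{1.4a} then reproduces the pair \eqref{4.11} and \eqref{4.12}. Adding and subtracting these produces $2\kappa\sigma = -(a+6b)\mathcal{R}/a$ and $2\kappa p = -(a+2b)\mathcal{R}/a$, so their ratio gives the equation of state \eqref{4.13}, $p/\sigma = (a+2b)/(a+6b)$, which is item (i).

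For items (ii) and (iii) I would simply specialise \eqref{4.13}. The dust condition $p=0$ requires $a+2b=0$, i.e.\ $a=-2b$, giving (ii); the dark-energy condition $\sigma+p=0$, equivalently $p/\sigma=-1$, requires $a+2b=-(a+6b)$, i.e.\ $a=-4b$, giving (iii). Both conclusions can alternatively be read straight off \eqref{4.11}-\eqref{4.12}.

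Item (iv) is the only one with any bite. I would impose the radiation relation $\sigma = 3p$, i.e.\ $p/\sigma = 1/3$, in \eqref{4.13}, obtaining $3(a+2b)=a+6b$, which collapses to $a=0$. But the definition \eqref{1.4} of the $B$-tensor explicitly takes $a$ to be non-zero, and both the generalised Einstein identification \eqref{1.5a} and the derivation of \eqref{4.9} itself divide through by $a$, so $a=0$ is excluded from the outset. Hence $\mathcal{N}$ cannot sustain a radiation era, proving (iv). The reasoning contains no real obstacle; the only implicit non-degeneracy assumptions to flag when dividing are $\mathcal{R}\neq 0$ and $\sigma\neq 0$, which are standard for a genuine PFS.
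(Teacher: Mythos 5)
Your proposal is correct and follows essentially the same route as the paper: it compares \eqref{4.9} with \eqref{1.1a} via \eqref{1.4a} to obtain \eqref{4.11}--\eqref{4.12}, forms the ratio \eqref{4.13}, and then specialises to $a=-2b$, $a=-4b$, and notes that the radiation condition forces $a=0$, which is excluded since $a$ is non-zero in \eqref{1.4}. The extra non-degeneracy remarks ($\mathcal{R}\neq 0$, $\sigma\neq 0$) are a reasonable, if implicit in the paper, precaution and do not change the argument.
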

\vspace{.4cm}
{\bf{Proof of the Theorem \ref{th3.3}}}

Solving the equations \eqref{4.11} and \eqref{4.12}, we get
\begin{equation}\label{4.14}
  p=-\frac{1}{2 a\kappa }(a+2b)\mathcal{R}
\end{equation}
and
\begin{equation}\label{4.15}
  \sigma=-\frac{1}{2 a\kappa }(a+6b)\mathcal{R}.
\end{equation}

A time-like convergence criterion for each time-like vector field $A_{j}$ of a spacetime is defined as $\mathcal{R}_{jk} A^{j}A^{k} > 0$ \cite{sac}. Let $\mathcal{N}$ admit the time-like convergence criterion. Then transvecting equation \eqref{4.9} by $ A^{j}A^{l}$, we acquire
\begin{equation}\label{4.16}
  A^{j}A^{l} \mathcal{R}_{jl}=-\frac{a+3b}{a}\mathcal{R}.
\end{equation}
By hypothesis, $A_{j}$ is a time-like vector field and the convergence condition $\mathcal{R}_{jk} A^{j}A^{k}>0$ holds. Hence, we get $\mathcal{R}<0$.\par

Therefore, Using equations \eqref{4.14} and \eqref{4.15}, we provide that $p>0$ and $\sigma>0$. Hence, we obtain $3p+\sigma >0$ which entails that $\mathcal{N}$ satisfies cosmic strong energy criterion.\par

In $\mathcal{N}$, we acquire that $\mathcal{R}<0$. Further, equation \eqref{4.16} infers that $\sigma>0$. Hence, we write, $\mathcal{N}$ contains pure matter.$\square$\par
\vspace{.4cm}
{\bf{Proof of the Theorem \ref{th3.4}}}

Again, the equation \eqref{1.4} yields
\begin{equation}\label{4.18}
  \nabla_{k}\mathcal{R}_{ij}-\nabla_{j}\mathcal{R}_{ik}=\nabla_{k}B_{ij}-\nabla_{j}B_{ik}.
\end{equation}

By hypothesis, $\nabla_{k}B_{ij}=\nabla_{j}B_{ik}$ and hence, from equation \eqref{4.18}, we acquire
\begin{equation}\label{4.19}
  \nabla_{k}\mathcal{R}_{ij}-\nabla_{j}\mathcal{R}_{ik}=0.
\end{equation}
Therefore, the last equation infers $\mathcal{R} =$ constant and hence from equation \eqref{1.6aa}, we provide $\nabla_{h}C^{h}_{ijk}=0$.\par

It is well known that a PFS with $\nabla_{h}C^{h}_{ijk}=0$ is a GRW spacetime \cite{Mantica5} and hence $\mathcal{N}$ reduces to a GRW spacetime.$\square$\par
\vspace{.4cm}
{\bf{Proof of the Theorem \ref{th3.5}}}

Now, from the last Theorem we see that $\mathcal{N}$ is a $GRW$ spacetime.\par

In \cite{survey}, Mantica and Molinari have shown the following:\par
{\bf{Theorem I.}}(\cite{survey})
 An $n$ ($n \geq 3$) dimensional Lorentzian manifold is a $GRW$ spacetime if and only if it admits a unit time-like torse-forming vector field: $\nabla_{i}v_{j}=\Psi (g_{ij}+v_{j}v_{i})$, which is also an eigenvector of the Ricci tensor.\\

Therefore, from Theorem I we have
\begin{equation}\label{n1}
  \nabla_{j}A_{i}=\Psi (g_{ij}+A_{i}A_{j}).
\end{equation}
The covariant derivative of equation \eqref{n1} reveals

\begin{align}\label{n2}
\nabla_{k}\nabla_{j}A_{i}={\Psi}_k(g_{ij}+A_iA_j)
+{\Psi}\big[{\Psi}(g_{ik}+A_iA_k)A_j+{\Psi}(g_{jk}+A_jA_k)A_i\big]
\end{align}
where ${\Psi}_k=\nabla_{k} \Psi$.\par

Assuming $w_k={\Psi}_k-{\Psi}^2A_k$, the last equation easily yields

\begin{align}\label{n3}
\nabla_{k}\nabla_{j}A_{i}-\nabla_{j}\nabla_{k}A_{i}=w_k(g_{ij}+A_iA_j)-w_j(g_{ik}+A_iA_k).
\end{align}
If we consider $w_k=fA_k$, then equation \eqref{n3} provides

 \begin{align}\label{n6}
 \mathcal{R}^h_{ijk}A_h=f(g_{ij}A_k-g_{ik}A_j).
  \end{align}	
	
In dimension 4, we know that
\begin{align}\label{n7}
E_{ij}= \mathcal{R}_{kijl}A^kA^l-\frac{1}{2}\{g_{ij}\mathcal{R}_{kl}-g_{il}\mathcal{R}_{kj}
+g_{kl}\mathcal{R}_{ij}-g_{kj}\mathcal{R}_{il}\}A^kA^l\nonumber\\
+\frac{\mathcal{R}}{6}(g_{ij}g_{kl}-g_{il}g_{jk})A^kA^l.
\end{align}

Using equations \eqref{4.9} and \eqref{n7}, we acquire
\begin{align}\label{n9}
E_{ij}= \mathcal{R}_{kijl}A^kA^l-\frac{a+3b}{3a}\mathcal{\mathcal{R}}\{g_{ij}+A_iA_j\}.
\end{align}

Using equation \eqref{n6} in \eqref{n9}, we provide

\begin{align}\label{n10}
E_{ij}=-(f+\frac{a+3b}{3a}\mathcal{\mathcal{R}})\{g_{ij}+A_iA_j\}.
\end{align}
Again, from equation \eqref{n6}, we infer

\begin{align}\label{n11}
A^h\mathcal{R}_{hk}=3fA_k.
\end{align}

Thus, using the equation \eqref{4.9}, the above equation gives

\begin{align}\label{n12}
f=-\frac{a+3b}{3a}\mathcal{R}.
\end{align}

Therefore, from equations \eqref{n10} and \eqref{n12}, we conclude that $E=0$.\par

Using equation \eqref{n6}, we acquire
 \begin{align}\label{p1}
 \mathcal{R}_{kjim}A^{m}=f\{g_{ij}A_k-g_{ik}A_j\}.
  \end{align}
Making use of the previous equation, we provide

\begin{align}\label{p2}
\{(\mathcal{R}_{jklm}A^{m})A_i+(\mathcal{R}_{kilm}A^{m})A_j+(\mathcal{R}_{ijlm}A^{m})A_k\}=0.
\end{align}
Considering the equation \eqref{p2}, we reveal
\begin{align}\label{p3}
\{\mathcal{R}_{jklm}A_{i}+\mathcal{R}_{kilm}A_j+\mathcal{R}_{ijlm}A_k\}A^m=0.
\end{align}
This shows that $\mathcal{N}$ has Riemann compatible vector field $A_{i}$.$\square$\par
\vspace{.4cm}
{\bf{Proof of the Theorem \ref{th3.7}}}

It is well circulated that $A_{j}$ is Riemann compatible \cite{Mantica3} if and only if it is Weyl compatible and
	\begin{align}\label{p4}
	A_{[j}R^m_{k]}A_m=0.
	\end{align}
Hence, from the last Theorem we conclude that vector field $A_{i}$ is Weyl compatible and therefore, $H=0$ \cite{Mantica3}.
Again, in a Lorentzian manifold of dimension $n$ $(n>3)$, the Weyl tensor $C$ is purely electric if and only if for every coordinate domain, we reveal \cite{her}
\begin{align}\label{p5}
\{C^{m}_{jkl}A_{i}+C^{m}_{kil}A_j+C^{m}_{ijl}A_k\}=0.
\end{align}
 Obviously, the equation \eqref{p5} is satisfied by the Weyl compatible vector field $A_{i}$. Therefore, we acquire $H=0$ and $E=0$. At last, we see that $\mathcal{N}$ is conformally flat and therefore, $\mathcal{N}$ is of Petrov type O.\par

Considering the Theorem \ref{th3.4}, we say that $\mathcal{N}$ is a GRW spacetime and also conformally flat. We know that a conformally flat GRW spacetime is a RW spacetime \cite{bro}.$\square$\par
\vspace{.4cm}
From the last Theorem we find that $\mathcal{N}$ is a RW spacetime.\par
A spatially flat RW spacetime provides
\begin{equation}\label{s1}
  ds^{2}=-dt^{2}+\varPsi ^{2}(t)[dr^2+r^2(d\theta^2+sin^2\theta d\phi ^{2})].
\end{equation}
The non-zero components of the Ricci tensor are
\begin{equation}\label{s11}
  R_{00}=-3\frac{\ddot{\varPsi}}{\varPsi};\,\,
  R_{11}=\;R_{22}=\;R_{33}=\varPsi^2[2(\frac{\dot{\varPsi}}{\varPsi})^2+\frac{\ddot{\varPsi}}{\varPsi}],
\end{equation}
and the scalar curvature is then
\begin{equation}\label{s2}
  R=6[\frac{\ddot{\varPsi}}{\varPsi}+(\frac{\dot{\varPsi}}{\varPsi})^2].
\end{equation}

Equations \eqref{1.3a} and \eqref{1.5a} jointly produce
\begin{equation}\label{s3}
\frac{1}{a}B_{lk}=\mathcal{R}_{lk}-\dfrac{\mathcal{R}}{2}\,g_{lk}.
\end{equation}
In a spacetime $\mathcal{N}$, from equation \eqref{4.7}, we acquire
\begin{equation}\label{s4}
  A^{k}B_{lk}=A_{l}B
\end{equation}
such that
\begin{equation}\label{s5}
  A_{l}=(-1,0,0,0),\;\;A^{l}A_{l}=-1.
\end{equation}
From the last two equations, we obtain
\begin{equation}\label{s6}
  A^{0}B_{00}=BA_{0}.
\end{equation}
Now using equations \eqref{1.5}, \eqref{s11}, \eqref{s3} in \eqref{s6}, we get
\begin{equation}\label{s7}
  \mathcal{R}_{00}=\frac{a+8b}{a}\mathcal{R}.
\end{equation}
Using equations \eqref{s11} and \eqref{s2}, the last equation gives
\begin{equation}\label{s8}
  \frac{(2a+16b)}{(3a+16b)}\frac{\dot{\varPsi}}{\varPsi}=\frac{\ddot{\varPsi}}{\dot{\varPsi}}.
\end{equation}
Solving the above equation, we reveal
\begin{equation}\label{s9}
  \varPsi (t)=\frac{1}{(\frac{a_{1}}{(a_{1}-1)
  (c_{1}t+c_{2})})^{\frac{a_{1}}{a_{1}-1}}}.
\end{equation}
where $a_{1}=\frac{(2a+16b)}{(3a+16b)}$.\par
 Hence, we state:
 \begin{cor}\label{th3.9}
  In a spatially flat spacetime $\mathcal{N}$, the scale factor $\varPsi (t)$ is described by the equation \eqref{s9}.
\end{cor}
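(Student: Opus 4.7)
The plan is to start from the spatially flat Robertson-Walker ansatz \eqref{s1}, extract the standard Ricci components and scalar curvature in \eqref{s11}--\eqref{s2}, then impose the algebraic constraint that the B-tensor carries in $\mathcal{N}$. Once this constraint is written in these coordinates it will reduce to a single second-order ODE for $\varPsi(t)$, which can be solved explicitly.

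First I would record the classical expressions \eqref{s11} and \eqref{s2} for the non-vanishing Ricci components $R_{00}$, $R_{11}=R_{22}=R_{33}$ and for $\mathcal{R}$ in terms of $\varPsi$, $\dot\varPsi$, $\ddot\varPsi$; these are metric computations that I would just quote. Next I would exploit the characterization of $\mathcal{N}$ obtained in the proof of Theorem \ref{th3.1}: since $B_{jl}=-A_jA_lB$ (equation \eqref{4.8}), contraction with $A^k$ gives $A^kB_{lk}=A_lB$. Combining this with the Einstein-type equation \eqref{1.5a} in the rewritten form \eqref{s3}, and with the comoving choice $A_l=(-1,0,0,0)$ from \eqref{s5}, the single $(0,0)$-equation \eqref{s6} arises, and substituting \eqref{s11} and \eqref{s2} into it yields \eqref{s7}, namely $\mathcal{R}_{00}=\frac{a+8b}{a}\mathcal{R}$.

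The next step is to insert the explicit Ricci and scalar quantities into \eqref{s7}. After cancelling the overall factor and collecting, the relation becomes
\begin{equation*}
-3\frac{\ddot\varPsi}{\varPsi}=\frac{a+8b}{a}\cdot 6\left[\frac{\ddot\varPsi}{\varPsi}+\left(\frac{\dot\varPsi}{\varPsi}\right)^{2}\right],
\end{equation*}
which, after rearrangement, reduces to the first-order-in-$\dot\varPsi$ equation \eqref{s8}:
\begin{equation*}
a_{1}\frac{\dot\varPsi}{\varPsi}=\frac{\ddot\varPsi}{\dot\varPsi},\qquad a_{1}=\frac{2a+16b}{3a+16b}.
\end{equation*}

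Finally I would integrate this ODE. Treating it as $\frac{d}{dt}\ln\dot\varPsi = a_{1}\frac{d}{dt}\ln\varPsi$, a first integration gives $\dot\varPsi=k\,\varPsi^{a_{1}}$ for some constant $k$, and separation of variables then yields $\varPsi^{1-a_{1}}=(1-a_{1})(k\,t+k')$. Solving for $\varPsi(t)$ and absorbing the constants into $c_{1},c_{2}$ reproduces the closed form \eqref{s9}. The only real obstacle is the careful bookkeeping of integration constants so that the exponents $a_{1}/(a_{1}-1)$ and the prefactor in \eqref{s9} come out in exactly the stated form; the differential-equation step itself is elementary by separation of variables.
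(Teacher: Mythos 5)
Your proposal follows essentially the same route as the paper: the same combination of \eqref{s4}--\eqref{s6} with \eqref{1.5}, \eqref{s3}, \eqref{s11} and \eqref{s2} to obtain \eqref{s7}, the same reduction to the ODE \eqref{s8} with $a_{1}=\frac{2a+16b}{3a+16b}$, and the same integration by separation of variables leading to the closed form \eqref{s9}. The only point to watch is the constant and exponent bookkeeping you already flag: your direct integration gives $\varPsi^{\,1-a_{1}}=(1-a_{1})(c_{1}t+c_{2})$, i.e.\ an exponent $\frac{1}{1-a_{1}}$, so the final matching with the exact printed form of \eqref{s9} should be checked carefully rather than asserted.
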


\section{Application of $\mathcal{N}$ spacetime in $f(\mathcal{R})$ gravity}

Now, we will look into a few characteristics of $\mathcal{N}$ in $f(\mathcal{R})$ gravity.\par
Choose the Einstein-Hilbert action
\begin{align}
S=\frac{1}{2k^2}\int{d^4x{\sqrt{-g}}f(\mathcal{R})}+\int{d^4x{\sqrt{-g}}L_{m}},
\nonumber
\end{align}
where  $L_m$ is the matter Lagrangian density described as
\begin{align}
T_{ij}=-\frac{2}{\sqrt{-g}}\frac{\delta (\sqrt{-g}L_m)}{\delta g^{ij}}.
\nonumber
\end{align}
In this case, $\kappa^{2}=8{\pi}G$, $G$ indicates the Newton's constant and hence the modified formula written as \cite{Sotiriou}
\begin{align}
S=\frac{1}{2k^2}\int{d^4x{\sqrt{-g}}f(\mathcal{R})}.
\nonumber
\end{align}

The field equations can be expressed in the following shape by applying the variation with $g^{ij}$
\begin{align}\label{36}
(g_{ij}{\square}-\nabla_{i}\nabla_{j})f_{\mathcal{R}}(\mathcal{R})-\frac{f(\mathcal{R})}{2}g_{ij}
+f_{\mathcal{R}}(\mathcal{R})\mathcal{R}_{ij}
=k^2T_{ij}
\end{align}
where $\square$ indicates the D'Alembert's operator and $f_{\mathcal{R}}(\mathcal{R})$, is the derivative with respect to $\mathcal{R}$.\par

Taking trace of equation \eqref{36} provides
\begin{align}\label{37}
-2f(\mathcal{R})+\mathcal{R}f_{\mathcal{R}}(\mathcal{R})+3{\square}f_{\mathcal{R}}(\mathcal{R})=k^2T.
\end{align}

Subtracting the expression $\frac{\mathcal{R}f_{\mathcal{R}}(\mathcal{R})}{2} g_{ij}$ from \eqref{36}, we infer
\begin{align}\label{38}
f_{\mathcal{R}}(\mathcal{R})\mathcal{R}_{ij}-\frac{\mathcal{R}f_{\mathcal{R}}(\mathcal{R})}{2}g_{ij}
=k^2T_{ij}+k^2T^{(curve)}_{ij}
\end{align}
such that
\begin{align}\label{39}
T^{(eff)}_{ij}=T_{ij}+T^{(curve)}_{ij}
\end{align}
in which
\begin{align}\label{40}
T^{(curve)}_{ij}=\frac{1}{k^2}\big[\frac{(f(\mathcal{R})-\mathcal{R}f_{\mathcal{R}}(\mathcal{R}))}{2}g_{ij}
+(\nabla_{i}\nabla_{j}-g_{ij}{\square})f_{\mathcal{R}}(\mathcal{R})\big].
\end{align}
With the help of \eqref{38}, we get

\begin{align}\label{41}
\mathcal{R}_{ij}-\frac{\mathcal{R}}{2}g_{ij}=\frac{k^2}{f_\mathcal{R}(\mathcal{R})}T_{ij}^{(eff)}
\end{align}	
in which the equations \eqref{39} and  \eqref{40} are satisfied.\par

Now we establish the following:
\begin{lem}
 In $\mathcal{N}$, the followings are hold
 \begin{equation}\label{l1}
   \nabla_{k}\mathcal{R}=4\mathcal{R}A_{k},
 \end{equation}
 \begin{equation}\label{l2}
   \nabla_{l}\nabla_{k}\mathcal{R}=4\mathcal{R}f g_{lk}+ 4\mathcal{R}(4+f)A_{l}A_{k},
 \end{equation}
 \begin{equation}\label{l3}
   \square \mathcal{R}= 4\mathcal{R}(3f-4).
 \end{equation}
 \end{lem}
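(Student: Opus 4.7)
The plan is to extract scalar identities from the defining tensor equation \eqref{1.6} of the pseudo $B$-symmetric structure and to combine them with the torse-forming property of $A$ coming from the GRW characterization of $\mathcal{N}$ (Theorem \ref{th3.4}, via the Mantica--Molinari criterion). Each of the three identities follows from a single contraction or a single covariant differentiation, so the lemma is essentially a bookkeeping exercise once the right inputs are identified.

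For \eqref{l1} I would contract \eqref{1.6} with $g^{ij}$, which yields $\nabla_k B = 2A_k B + 2 A^i B_{ki}$, and then apply equation \eqref{4.7} in the form $A^i B_{ki}=A_k B$ to collapse this to $\nabla_k B = 4 A_k B$. Since \eqref{1.5} gives $B=(a+4b)\mathcal{R}$ with $a,b$ regarded as constants, dividing through produces $\nabla_k \mathcal{R} = 4\mathcal{R} A_k$.

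For \eqref{l2} I would differentiate \eqref{l1} once more, writing $\nabla_l\nabla_k \mathcal{R} = 4(\nabla_l \mathcal{R})A_k + 4\mathcal{R}\,\nabla_l A_k$. Re-substituting \eqref{l1} into the first summand produces $16\mathcal{R} A_l A_k$, while the second summand is handled by the torse-forming relation \eqref{n1}, namely $\nabla_l A_k = \Psi(g_{lk}+A_l A_k)$, under the identification $f:=\Psi$. Collecting the $g_{lk}$- and $A_l A_k$-coefficients then gives $\nabla_l\nabla_k \mathcal{R} = 4\mathcal{R} f\, g_{lk} + 4\mathcal{R}(4+f) A_l A_k$.

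Finally, \eqref{l3} follows by tracing \eqref{l2} with $g^{lk}$: using $g^{lk}g_{lk}=4$ and $g^{lk}A_l A_k = A^l A_l = -1$ (unit timelike) gives $\square \mathcal{R} = 16\mathcal{R} f - 4\mathcal{R}(4+f) = 4\mathcal{R}(3f-4)$. The only conceptually subtle point, and the one place where I would pause to justify carefully, is the identification of the scalar $f$ in the statement with the torse-forming coefficient $\Psi$ of the GRW structure, rather than with the auxiliary scalar also denoted $f$ in the proof of Theorem \ref{th3.5}; once this identification is pinned down, all three identities reduce to one-line contractions.
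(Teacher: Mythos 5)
Your argument is correct and matches the paper's own proof: the paper likewise contracts \eqref{1.6} with $g^{ij}$, uses \eqref{4.7} and \eqref{1.5} with $a,b$ constant to obtain \eqref{l1}, then differentiates \eqref{l1} to get \eqref{l2} and traces with $g^{lk}$ to get \eqref{l3}. Your explicit appeal to the torse-forming relation \eqref{n1}, with the lemma's $f$ read as the coefficient $\Psi$, is precisely the step the paper leaves implicit (it only says ``using the same equation''), and that identification is the one under which \eqref{l2} and \eqref{l3} hold as stated.
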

 \begin{proof}
 Let us choose the $\mathcal{N}$ spacetime. Then multiplying equation \eqref{1.6} by $g^{ij}$ and using equation \eqref{4.7}, we acquire
\begin{equation}\nonumber
  \nabla_{k}B=4A_{k}B
\end{equation}
 which entails
 \begin{equation}\label{l4}
   (a+4b)\nabla_{k}\mathcal{R}=\{4(a+4b)A_{k}-(a_{k}+4b_{k})\}\mathcal{R},
 \end{equation}
 where we have used equation \eqref{1.5}.\par
 
If we consider $a,b=$ constant, then equation \eqref{l4} yields equation \eqref{l1}.\par

Taking the covariant differentiation of equation \eqref{l1} and using the same equation, we can easily obtain \eqref{l2}.\par

Again, multiplying equation \eqref{l2} by $g^{lk}$, equation \eqref{l3} can be achieved.
 \end{proof}

Let $f(\mathcal{R})$ be an analytic function, then we write
\begin{equation}\label{11z}
\nabla_{i}\nabla_{j}f_{\mathcal{R}}(\mathcal{R})=f_{\mathcal{R}\mathcal{R}}(\mathcal{R})\nabla_{i}\nabla_{j}\mathcal{R}
+f_{\mathcal{R}\mathcal{R}\mathcal{R}}(\mathcal{R})(\nabla_{i}\mathcal{R})(\nabla_{j}\mathcal{R}).
\end{equation}
Hence, equation \eqref{11z} yields
\begin{equation}\label{12z}
  {\square}f_{\mathcal{R}}(\mathcal{R})=f_{\mathcal{R}\mathcal{R}}(\mathcal{R}){\square}\mathcal{R}
  +f_{\mathcal{R}\mathcal{R}\mathcal{R}}(\mathcal{R})g^{ij}(\nabla_{i}\mathcal{R})(\nabla_{j}\mathcal{R}).
\end{equation}
Now, using the equations \eqref{l1}, \eqref{l2} and \eqref{11z}, we acquire
\begin{eqnarray}\label{13z}
  \nabla_{i}\nabla_{j}f_{\mathcal{R}}(\mathcal{R}) &=& \{4\mathcal{R}(4+f)f_{\mathcal{R}\mathcal{R}}(\mathcal{R})
  +16\mathcal{R}^{2}f_{\mathcal{R}\mathcal{R}\mathcal{R}}(\mathcal{R})\}A_{i}A_{j}\nonumber\\&&
  +4 f \mathcal{R} f_{\mathcal{R}\mathcal{R}} (\mathcal{R}) g_{ij}.
\end{eqnarray}
Multiplying equation \eqref{13z} by $g^{ij}$, we infer
\begin{equation}\label{14z}
{\square}f_{\mathcal{R}}(\mathcal{R})=4\mathcal{R}(3f-4)f_{\mathcal{R}\mathcal{R}}(\mathcal{R})
-16\mathcal{R}^{2}f_{\mathcal{R}\mathcal{R}\mathcal{R}}(\mathcal{R}).
\end{equation}
Again, utilizing the equations \eqref{39}, \eqref{40} and \eqref{41}, we provide
\begin{align}\label{15z}
f_{\mathcal{R}}(\mathcal{R})\mathcal{R}_{ij}-f_{\mathcal{R}}(\mathcal{R})\frac{\mathcal{R}}{2}g_{ij}=k^2T_{ij}
-\frac{\mathcal{R}f_{\mathcal{R}}(\mathcal{R})-f(\mathcal{R})}{2}g_{ij}+(\nabla_{i}\nabla_{j}
-g_{ij}{\square})f_{\mathcal{R}}(\mathcal{R}).
\end{align}	
Using the equations \eqref{13z} and \eqref{14z} in \eqref{15z}, we reveal
\begin{eqnarray}\label{16z}
&&f_{\mathcal{R}}(\mathcal{R})\big(\mathcal{R}_{ij}-\frac{\mathcal{R}}{2}g_{ij}\big)=\kappa^2T_{ij}\nonumber\\&&
+\big[\frac{f(\mathcal{R})-\mathcal{R}f_{\mathcal{R}}(\mathcal{R})}{2}
+16\mathcal{R}^{2}f_{\mathcal{R}\mathcal{R}\mathcal{R}}(\mathcal{R})\nonumber\\&& +8\mathcal{R}(2-f)f_{\mathcal{R}\mathcal{R}}(\mathcal{R})\big]g_{ij}\nonumber\\&&
+\big[16\mathcal{R}^{2}f_{\mathcal{R}\mathcal{R}\mathcal{R}}(\mathcal{R}) +4\mathcal{R}(4+f)f_{\mathcal{R}\mathcal{R}}(\mathcal{R})\big]A_{i}A_{j}.
\end{eqnarray}	
Using equation \eqref{4.9} in equation \eqref{16z}, we acquire
\begin{eqnarray}\label{17z}
T_{ij}&=&-\frac{1}{\kappa^2}
\big[\frac{f(\mathcal{R})}{2}
+16\mathcal{R}^{2}f_{\mathcal{R}\mathcal{R}\mathcal{R}}(\mathcal{R})\nonumber\\&& +8\mathcal{R}(2-f)f_{\mathcal{R}\mathcal{R}}(\mathcal{R})+\frac{b}{a}\mathcal{R}f_{\mathcal{R}}(\mathcal{R}) \big]g_{ij}\nonumber\\&&
-\frac{1}{\kappa^2}\big[16\mathcal{R}^{2}f_{\mathcal{R}\mathcal{R}\mathcal{R}}(\mathcal{R}) +4\mathcal{R}(4+f)f_{\mathcal{R}\mathcal{R}}(\mathcal{R})\nonumber\\&&
 +\frac{(a+4b)}{a}\mathcal{R}f_{\mathcal{R}}(\mathcal{R})\big]A_{i}A_{j}.
\end{eqnarray}	
Therefore, using equations \eqref{1.2a} and \eqref{17z}, we find
\begin{eqnarray}\label{18z}
p&=&-\frac{1}{\kappa^2}
\big[\frac{f(\mathcal{R})}{2}
+16\mathcal{R}^{2}f_{\mathcal{R}\mathcal{R}\mathcal{R}}(\mathcal{R})\nonumber\\&& +8\mathcal{R}(2-f)f_{\mathcal{R}\mathcal{R}}(\mathcal{R})+\frac{b}{a}\mathcal{R}f_{\mathcal{R}}(\mathcal{R}) \big]
\end{eqnarray}
and
\begin{eqnarray}\label{19z}
\sigma&=&-\frac{1}{\kappa^2}
\big[-\frac{f(\mathcal{R})}{2}-\frac{b}{a}\mathcal{R}f_{\mathcal{R}}(\mathcal{R})\nonumber\\&& -12\mathcal{R}f f_{\mathcal{R}\mathcal{R}}(\mathcal{R})
+\frac{(a+4b)}{a}\mathcal{R}f_{\mathcal{R}}(\mathcal{R}) \big]
\end{eqnarray}
Therefore, we provide:
\begin{theo}
In a spacetime $\mathcal{N}$ satisfying $f(\mathcal{R})$ gravity, $p$ and $\sigma$ are given by equations \eqref{18z} and \eqref{19z}, respectively.
\end{theo}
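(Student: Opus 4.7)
The plan is to derive the expressions for $p$ and $\sigma$ directly from the modified Einstein equation of $f(\mathcal{R})$ gravity, exploiting the special Ricci structure of $\mathcal{N}$ established in Theorem \ref{th3.1}. The starting point is the field equation \eqref{15z}, which already isolates $T_{ij}$ in terms of $\mathcal{R}_{ij}$, $g_{ij}$, and the tensors $\nabla_i\nabla_j f_{\mathcal{R}}(\mathcal{R})$ and $\square f_{\mathcal{R}}(\mathcal{R})$. Since $T_{ij}$ for a perfect fluid has the form \eqref{1.2a}, once the right-hand side is rewritten as a linear combination of $g_{ij}$ and $A_i A_j$, one can read off $p$ and $\sigma$ from the coefficients.

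First I would substitute the Ricci tensor expression \eqref{4.9} into the left-hand side of \eqref{15z}, so that the Ricci part contributes terms proportional to $g_{ij}$ (via the $-\frac{b\mathcal{R}}{a}g_{ij}$ piece and the $-\frac{\mathcal{R}}{2}g_{ij}$ piece) and to $A_iA_j$ (via the $-\frac{a+4b}{a}\mathcal{R} A_i A_j$ piece), each multiplied by $f_{\mathcal{R}}(\mathcal{R})$. Next I would insert the formulas for $\nabla_i\nabla_j f_{\mathcal{R}}(\mathcal{R})$ and $\square f_{\mathcal{R}}(\mathcal{R})$ from \eqref{13z} and \eqref{14z}, which were themselves obtained from the preceding Lemma by applying the chain rule identities \eqref{11z} and \eqref{12z}. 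This yields precisely the relation \eqref{16z}, and after rearrangement gives the explicit expression \eqref{17z} for $T_{ij}$ as $\lambda_1 g_{ij} + \lambda_2 A_i A_j$ with $\lambda_1, \lambda_2$ explicit functions of $\mathcal{R}$, $a$, $b$, $f$ and its derivatives.

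At this stage the identification is straightforward: comparing \eqref{17z} with \eqref{1.2a}, namely $T_{ij} = (\sigma+p)A_i A_j + p\, g_{ij}$ (using $u_i = A_i$, which is licit since $A_i$ is unit time-like), one concludes that $p$ equals the coefficient $\lambda_1$ of $g_{ij}$, giving \eqref{18z}, while $\sigma + p$ equals $-\lambda_2$ (the sign coming from $A^i A_i = -1$ versus the convention in \eqref{1.2a}), so that $\sigma = -\lambda_2 - p$, which after simplification produces \eqref{19z}.

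The main obstacle will be purely bookkeeping: keeping the signs and the coefficients $\frac{b}{a}$, $\frac{a+4b}{a}$, $4+f$, $2-f$, and $3f-4$ straight when substituting \eqref{4.9}, \eqref{13z}, and \eqref{14z} into \eqref{15z}. In particular, care is needed when subtracting the $A_iA_j$ coefficient to obtain $\sigma$ from $\sigma + p$, since the $12\mathcal{R} f\, f_{\mathcal{R}\mathcal{R}}(\mathcal{R})$ term in \eqref{19z} arises from the cancellation of the $8\mathcal{R}(2-f)f_{\mathcal{R}\mathcal{R}}$ and $4\mathcal{R}(4+f)f_{\mathcal{R}\mathcal{R}}$ contributions together with the $16\mathcal{R}^2 f_{\mathcal{R}\mathcal{R}\mathcal{R}}$ pieces, which cancel exactly. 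No conceptually new ingredient is needed beyond the Lemma, the Ricci form of Theorem \ref{th3.1}, and the $f(\mathcal{R})$ field equation.
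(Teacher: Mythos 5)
Your route is the same as the paper's: substitute the Ricci form \eqref{4.9} of $\mathcal{N}$ and the Lemma-derived expressions \eqref{13z}, \eqref{14z} into the field equation \eqref{15z}, arrive at \eqref{16z} and then at \eqref{17z}, which writes $T_{ij}=\lambda_1 g_{ij}+\lambda_2 A_iA_j$ with explicit coefficients, and finally compare with the perfect-fluid form \eqref{1.2a}. The only step that would fail as written is your identification ``$\sigma+p=-\lambda_2$''. Since $g_{ij}$ and $A_iA_j$ are linearly independent, matching \eqref{17z} against $T_{ij}=(\sigma+p)A_iA_j+p\,g_{ij}$ is a coefficient comparison, not a contraction, so no sign from $A^iA_i=-1$ can enter: one gets $p=\lambda_1$ and $\sigma+p=\lambda_2$, i.e. $\sigma=\lambda_2-\lambda_1$ (equivalently, contract \eqref{17z} with $A^iA^j$ to read off $\sigma$ directly). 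Had $\sigma$ really been $-\lambda_2-\lambda_1$, the two $16\mathcal{R}^2 f_{\mathcal{R}\mathcal{R}\mathcal{R}}(\mathcal{R})$ contributions would add rather than cancel and the second-derivative coefficient would be $(32-4f)\mathcal{R}$ instead of $12\mathcal{R}f$, which contradicts the cancellation pattern you yourself describe; so the sign remark is an isolated slip, while your bookkeeping $4\mathcal{R}(4+f)-8\mathcal{R}(2-f)=12\mathcal{R}f$ together with the exact cancellation of the third-derivative terms is precisely the correct computation of $\lambda_2-\lambda_1$.

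One further point worth flagging: carrying this computation through gives $\sigma$ with the term $+12\mathcal{R}f f_{\mathcal{R}\mathcal{R}}(\mathcal{R})$, in agreement with the model formula \eqref{5.12}; the minus sign in front of $12\mathcal{R}f f_{\mathcal{R}\mathcal{R}}(\mathcal{R})$ in \eqref{19z} as printed appears to be a typographical slip in the paper, so do not adjust your derivation to reproduce it.
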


\subsection{Energy Conditions}
In the following subsection, we verify the energy conditions for the new model $f\left(\mathcal{R}\right)= e^{(\alpha \mathcal{R})}-ln(\beta \mathcal{R})$ in which $\alpha$ and $\beta$ are positive constants.\par

In modified gravity the energy conditions are demonstrated as
\begin{align*}
 \mathrm{NEC}&\quad\mathrm{if\;and\;only\;if}\quad\sigma+p\geq0,\\
  \mathrm{SEC}&\quad\mathrm{if\;and\;only\;if}\quad\sigma+p\geq0\quad\mathrm{and}\quad\sigma+3p\geq0,\\
 \mathrm{DEC}&\quad\mathrm{if\;and\;only\;if}\quad\sigma\pm p\geq0\quad\mathrm{and}\quad\sigma\geq0,\\
 \mathrm{WEC}&\quad\mathrm{if\;and\;only\;if}\quad\sigma+p\geq0\quad\mathrm{and}\quad\sigma\geq0.
\end{align*}
With the help of \eqref{18z} and \eqref{19z}, $p$ and $\sigma$ are given by
\begin{eqnarray}\label{5.11}
p&=&-\frac{1}{\kappa^2}
\big[\frac{e^{(\alpha \mathcal{R})}-ln(\beta \mathcal{R})}{2}+16\mathcal{R}^{2}(\alpha^{3}e^{(\alpha \mathcal{R})}-\frac{2}{\mathcal{R}^{3}})\nonumber\\&&
+8\mathcal{R}(2-f)(\alpha^{2}e^{(\alpha \mathcal{R})}+\frac{1}{\mathcal{R}^{2}})+\frac{b}{a}\mathcal{R}(\alpha e^{(\alpha \mathcal{R})}-\frac{1}{\mathcal{R}}) \big]
\end{eqnarray}
and
\begin{eqnarray}\label{5.12}
\sigma&=&-\frac{1}{\kappa^2}
\big[-\frac{e^{(\alpha \mathcal{R})}-ln(\beta \mathcal{R})}{2}+\frac{(a+4b)}{a}\mathcal{R}(\alpha e^{(\alpha \mathcal{R})}-\frac{1}{\mathcal{R}})\nonumber\\&&
+12\mathcal{R}f(\alpha^{2}e^{(\alpha \mathcal{R})}+\frac{1}{\mathcal{R}^{2}})-\frac{b}{a}\mathcal{R}(\alpha e^{(\alpha \mathcal{R})}-\frac{1}{\mathcal{R}}) \big].
\end{eqnarray}

The energy conditions for the above model are now examined. The energy conditions for this arrangement may now be discussed using equations \eqref{5.11} and \eqref{5.12}. To draw the following figures, we set $\kappa=2.077\times 10^{-43}$, $f=a=b=-1$, $\alpha=1$, $\beta \in [1,2]$ and $R \in [0,1]$.\par
\begin{tabulary}{\linewidth}{CC}
	\includegraphics[height=0.24\textheight]{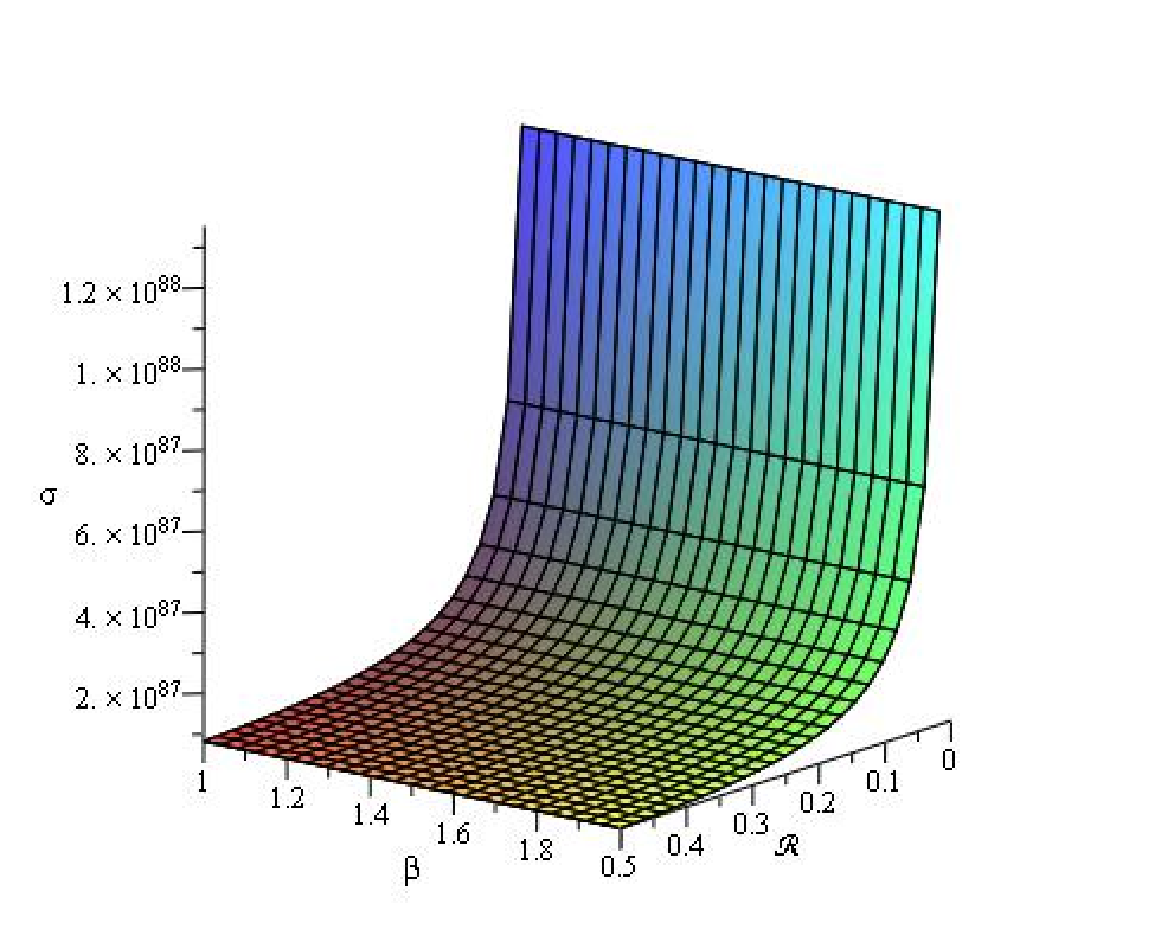}
	&
	\includegraphics[height=0.24\textheight]{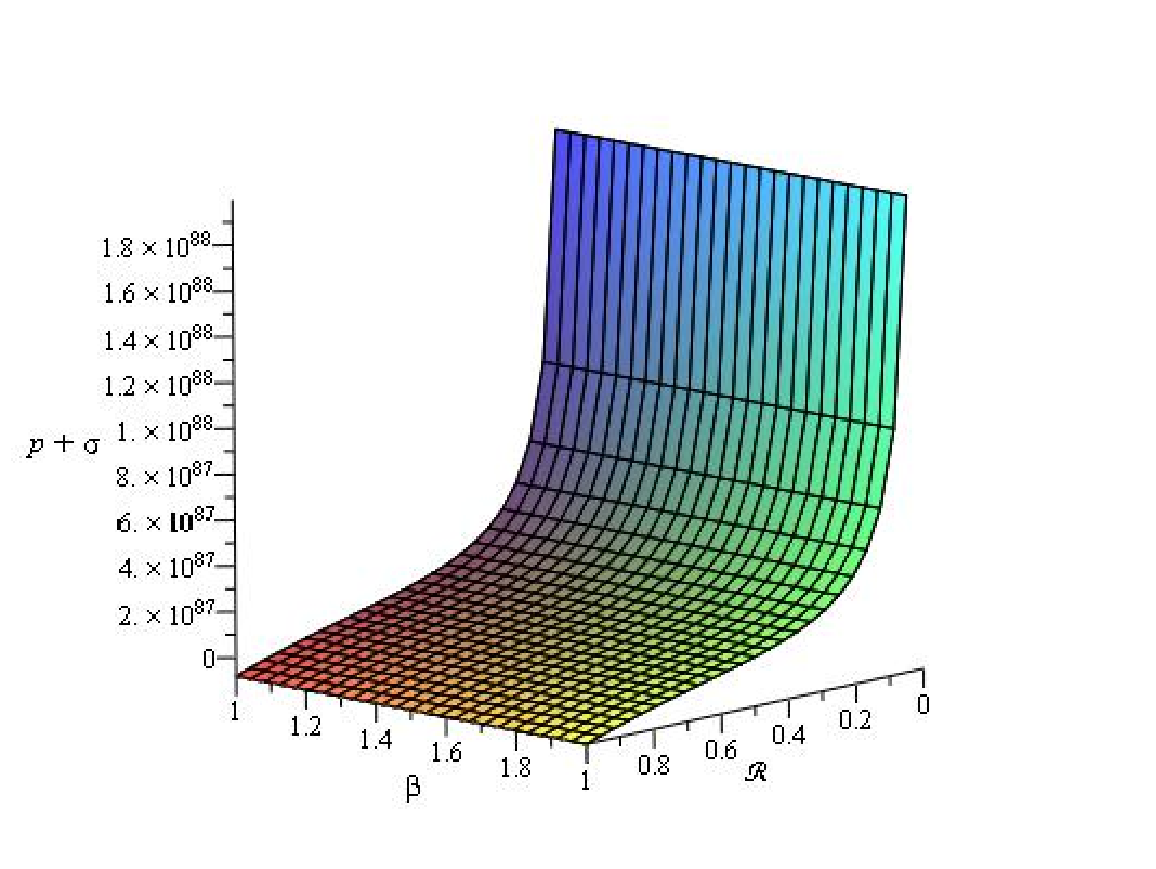}
	\\
	{\bf Fig. 1:} Development of $\sigma$ with reference to $\mathcal{R}$ and $\beta$ &{\bf Fig. 2:} Development of $p+\sigma$ with reference to $\mathcal{R}$ and $\beta$
	
\end{tabulary}
\begin{tabulary}{\linewidth}{CC}
	
	\includegraphics[height=0.25\textheight]{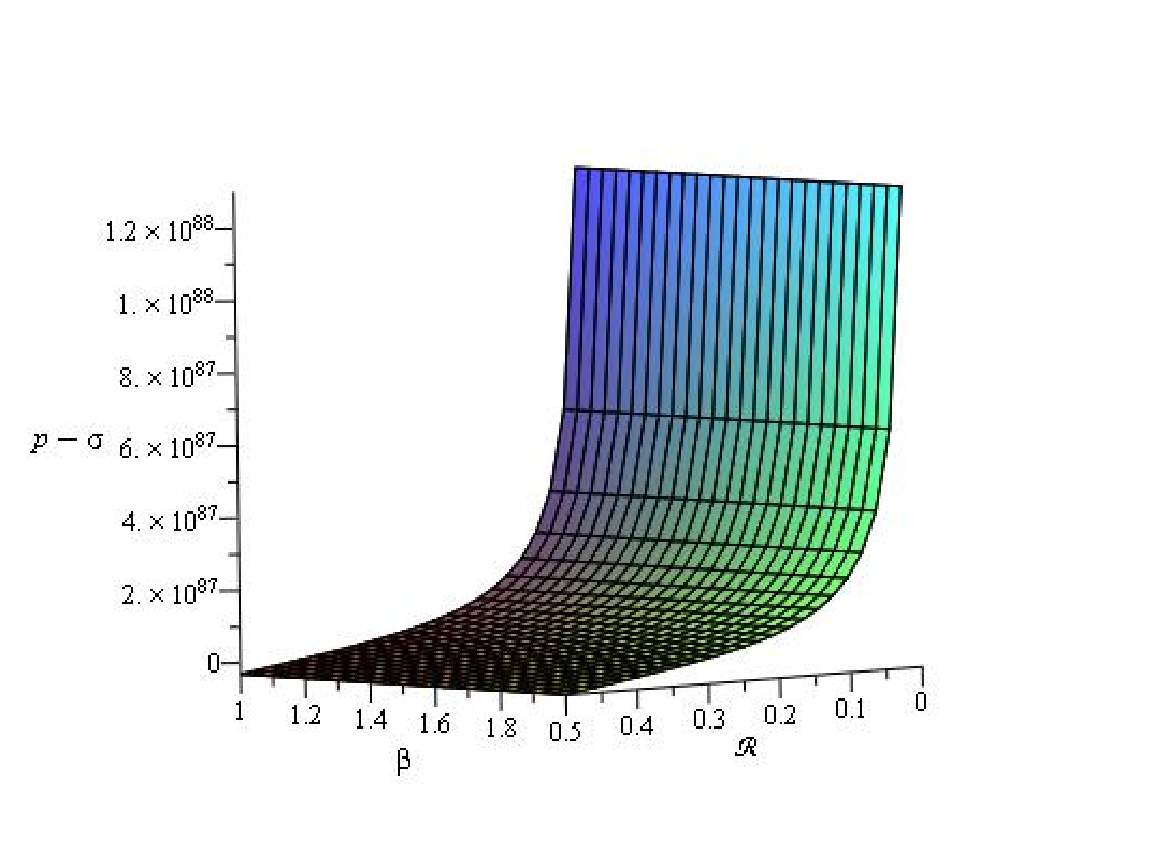}
    &
	\includegraphics[height=0.24\textheight]{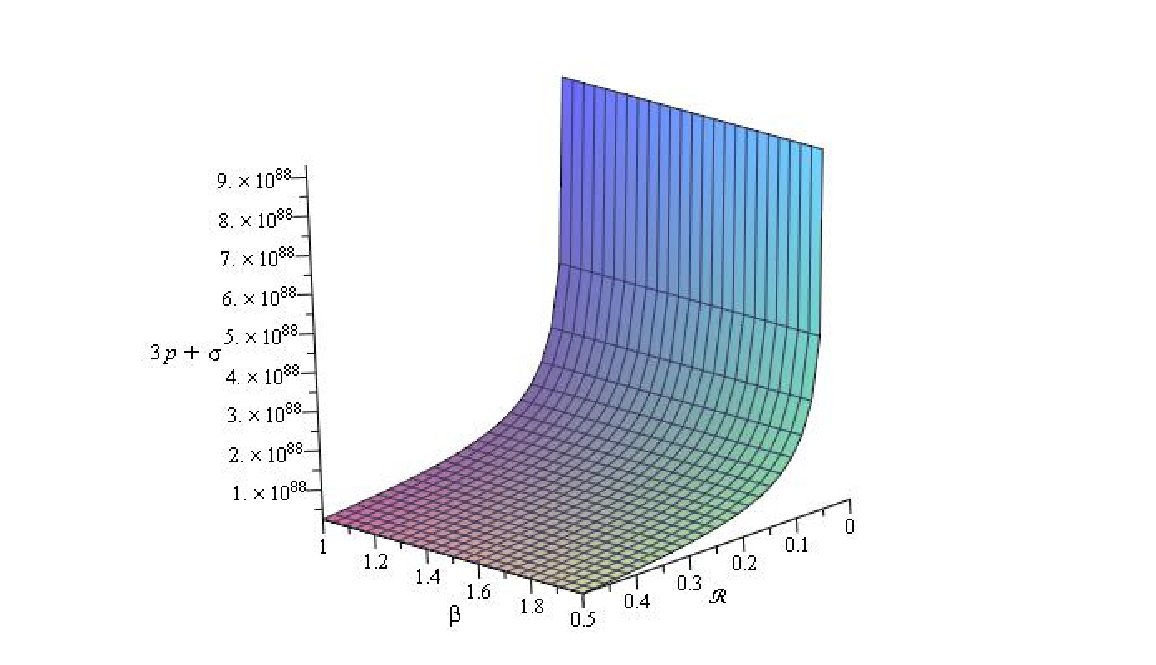}
    \\	
	
	{\bf Fig. 3:} Development of $p-\sigma$ with reference to $\mathcal{R}$ and $\beta$  &{\bf Fig. 4:} Development of $\sigma+3p$ with reference to $\mathcal{R}$ and $\beta$
	
\end{tabulary}

Figures $1$ and $2$ demonstrate that, for parameters $\beta \in\left[1,2\right], \mathcal{R}\in\left[0,1\right]$, the energy density and $p+\sigma$ can not be negative, and that, for larger values of $\beta$ and $\mathcal{R}$, they are high. Because NEC is a component of WEC, NEC and WEC are fulfilled. The $p-\sigma$ profile for $\beta \in\left[1,2\right], \mathcal{R}\in\left[0,1\right]$ is positive, as seen in Fig. $3$. It is evident from Figs. $1$, $2$, and $3$ that DEC is validated. Furthermore, we can observe that SEC is satisfied from Figs. $2$ and $4$, and this finding yields the late-time acceleration of the Cosmos\cite{LDMS21}. Moreover, each result aligns with the $\Lambda$CDM model \cite{AAAABBBBBBB20}.

\subsection{$\mathcal{N}$ spacetime in $f(\mathcal{R})$ gravity with constant $\mathcal{R}$:}

The $f(\mathcal{R})$ gravity field equations are as follows:
\begin{eqnarray}
  \kappa T_{lk} &=& f^{\prime}(\mathcal{R})\mathcal{R}_{lk}
  -f^{\prime\prime\prime}(\mathcal{R})\nabla_{l}\mathcal{R}\nabla_{k}\mathcal{R}
  -f^{\prime\prime}(\mathcal{R})\nabla_{l}\nabla_{k}\mathcal{R}\nonumber\\&&
  +g_{lk}[f^{\prime\prime\prime}(\mathcal{R})\nabla_{m}\mathcal{R}\nabla^{m}\mathcal{R}
  +f^{\prime\prime}(\mathcal{R})\nabla^{2}\mathcal{R}-\frac{1}{2}f(\mathcal{R})],
\end{eqnarray}
$f^{\prime}(\mathcal{R})$ is the derivative with respect to $\mathcal{R}$.\par
For $\mathcal{R}=$ constant, we infer
\begin{equation}\label{d2}
 \mathcal{R}_{lk}-\frac{f}{2f^{\prime}}g_{lk}=\frac{\kappa}{f^{\prime}}T_{lk}.
\end{equation}
Using equation \eqref{1.2a} in equation \eqref{d2}, we obtain
\begin{equation}\label{d3}
 \mathcal{R}_{lk}=\frac{f}{2f^{\prime}}g_{lk}+\frac{\kappa}{f^{\prime}}[(\sigma+p)u_{k}u_{l}+p g_{kl}].
\end{equation}
Making use of equations \eqref{4.9} and \eqref{d3}, we acquire
\begin{equation}\label{d4}
 \frac{\kappa}{f^{\prime}}(\sigma+p)=\frac{a+4b}{a}\mathcal{R}
 \end{equation}
and
\begin{equation}\label{d5}
\frac{p\kappa}{f^{\prime}}+ \frac{f}{2f^{\prime}}= -\frac{b}{a}\mathcal{R}.
\end{equation}
Solving the last two equations, we get
\begin{equation}\label{d6}
  p=-\frac{bf^{\prime}\mathcal{R}}{\kappa a}-\frac{f}{2\kappa}
\end{equation}
and
\begin{equation}\label{d7}
  \sigma=-\frac{(a+3b)f^{\prime}\mathcal{R}}{\kappa a}+\frac{f}{2\kappa}.
\end{equation}
Therefore, we provide:
\begin{theo}
In a spacetime $\mathcal{N}$ obeying $f(\mathcal{R})$ gravity with constant $\mathcal{R}$, $p$ and $\sigma$ are given by \eqref{d6} and \eqref{d7}, respectively.
\end{theo}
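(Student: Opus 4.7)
The plan is to start from the $f(\mathcal{R})$ gravity field equations displayed just before the theorem and specialize to the regime $\mathcal{R} = \text{constant}$. Under this hypothesis every term involving $\nabla_{l}\mathcal{R}$, $\nabla_{l}\nabla_{k}\mathcal{R}$ and $\square\mathcal{R}$ drops out, leaving a purely algebraic relation of the shape $\mathcal{R}_{lk} - \frac{f}{2f^{\prime}}\,g_{lk} = \frac{\kappa}{f^{\prime}}\,T_{lk}$, which is precisely \eqref{d2}. This reduction is the step that makes the whole argument linear.

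Next I would invoke Theorem \ref{th3.1}, which guarantees that $\mathcal{N}$ is a perfect fluid spacetime whose velocity field coincides with the unit time-like associated vector $A$ of the pseudo $B$-symmetric structure. Substituting the PFS expression \eqref{1.2a} for $T_{lk}$ into \eqref{d2} yields a first representation of the Ricci tensor, of the form $\mathcal{R}_{lk} = \bigl(\tfrac{f}{2f^{\prime}} + \tfrac{\kappa p}{f^{\prime}}\bigr) g_{lk} + \tfrac{\kappa(\sigma+p)}{f^{\prime}}\,u_{l}u_{k}$. On the other hand, equation \eqref{4.9}, which holds for any $\mathcal{N}$ regardless of the gravity theory, provides a second representation of the Ricci tensor, namely $\mathcal{R}_{lk} = -\frac{b\mathcal{R}}{a}\,g_{lk} - \frac{a+4b}{a}\,\mathcal{R}\,A_{l}A_{k}$.

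The key step is then to identify $u$ with $A$ and match the coefficients of the two linearly independent symmetric tensors $g_{lk}$ and $u_{l}u_{k}$ separately. The $u_{l}u_{k}$-coefficients produce equation \eqref{d4}, which expresses $\sigma+p$ in terms of $\mathcal{R}$ and $f^{\prime}(\mathcal{R})$, while the $g_{lk}$-coefficients produce \eqref{d5}, which expresses $p$ in terms of $\mathcal{R}$, $f$ and $f^{\prime}$. Solving this $2\times 2$ linear system for $p$ and $\sigma$ then delivers the claimed formulas \eqref{d6} and \eqref{d7}. I do not anticipate any serious obstacle: the whole argument is a short exercise in linear algebra once the two expressions for $\mathcal{R}_{lk}$ are placed side by side. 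The only delicate points worth flagging in the write-up are the tacit non-degeneracy assumption $f^{\prime}(\mathcal{R}) \neq 0$, needed to pass from the full field equation to \eqref{d2}, and the linear independence of $g_{lk}$ and $u_{l}u_{k}$ at a generic point, which follows from $u$ being time-like and is what allows the matching of coefficients to be performed unambiguously.
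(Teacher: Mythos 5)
Your proposal follows essentially the same route as the paper: the paper likewise specializes the $f(\mathcal{R})$ field equations to constant $\mathcal{R}$ to get \eqref{d2}, inserts the perfect fluid form \eqref{1.2a} (justified by Theorem \ref{th3.1}, with the fluid velocity identified with $A$), compares the resulting expression \eqref{d3} with \eqref{4.9}, and solves the two coefficient equations \eqref{d4}--\eqref{d5} for $p$ and $\sigma$. The only point worth noting is that matching the $u_{l}u_{k}$ coefficients actually gives $\frac{\kappa}{f^{\prime}}(\sigma+p)=-\frac{a+4b}{a}\mathcal{R}$ (the paper's displayed \eqref{d4} omits this minus sign, while the final formulas \eqref{d6} and \eqref{d7} are consistent with the signed version), and your explicit flags of $f^{\prime}(\mathcal{R})\neq0$ and of the independence of $g_{lk}$ and $u_{l}u_{k}$ are sensible additions rather than departures from the paper's argument.
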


\subsection{Energy Conditions}
In general relativity, energy conditions are vital tools to study black holes and wormholes in numerous modified gravities. To specify certain energy conditions in our current investigation of the $f(\mathcal{R})$ gravity, we must find the effective isotropic pressure $p^{eff}$ and the effective energy density $\sigma^{eff}$ (see, \cite{ade}-\cite{ade1}).\par

Equation \eqref{d2}, can be rewritten in the following form
\begin{equation}\label{d8}
 \mathcal{R}_{lk}-\frac{\mathcal{R}}{2}g_{lk}=\frac{\kappa}{f^{\prime}}T_{lk}^{eff},
\end{equation}
in which
\begin{equation}\label{d9}
  T_{lk}^{eff}=T_{lk}+\frac{(f-\mathcal{R}f^{\prime})}{2\kappa}g_{lk}.
\end{equation}
Then equation \eqref{1.2a} reduces to
\begin{equation}
\label{d10}
T_{lk}^{eff}=(\sigma^{eff}+p^{eff})a_{l}a_{k}+p^{eff} g_{lk},
\end{equation}
in which $p^{eff}=p+\frac{(f-\mathcal{R}f^{\prime})}{2\kappa}$ and $\sigma^{eff}=\sigma-\frac{(f-\mathcal{R}f^{\prime})}{2\kappa}$.\par
In our case, using equations \eqref{d6} and \eqref{d7}, we provide
\begin{equation}\label{d11}
  p^{eff}=-\frac{bf^{\prime}\mathcal{R}}{\kappa a}-\frac{\mathcal{R} f^{\prime}}{2\kappa}
\end{equation}
and
\begin{equation}\label{d12}
  \sigma^{eff}=-\frac{(a+3b)f^{\prime}\mathcal{R}}{\kappa a}+\frac{\mathcal{R} f^{\prime}}{2\kappa}.
\end{equation}
Now we investigate the energy conditions in $\mathcal{N}$ with non-zero constant scalar curvature
obeying $f(\mathcal{R})$ gravity. With the help of equations \eqref{d11} and \eqref{d12}, we find the conditions for NEC, DEC, WEC and SEC, respectively in this set up and they are given as follows:

\begin{table}[h]
    \centering
        \begin{tabular}{|c|c|c|}
    \hline
    \multicolumn{3}{|c|}{Table 1}\\
    \hline
     \multicolumn{3}{|c|}{Validity of energy conditions in $(PBS)_{4}$ spacetime }\\
    \hline
      Energy Condition & Inequalities & Conditions of validation\\
        \hline
        NEC & $p^{eff}+\sigma^{eff} \geq 0$ & $ (a+4b)\mathcal{R} \leq 0$\\
        \hline
        WEC & $\sigma^{eff} \geq 0 $ and  $p^{eff}+\sigma^{eff} \geq 0$ & $(a+6b)\mathcal{R} \leq 0$ \\
         & & and $ (a+4b)\mathcal{R} \leq 0$\\
        \hline
        DEC & $\sigma^{eff} \geq 0 $ and $\sigma^{eff} \pm p^{eff} \geq 0$ & $(a+6b)\mathcal{R} \leq 0$ , $ b\mathcal{R} \geq 0$ \\
         & & and $(a+4b)\mathcal{R} \leq 0$\\
        \hline
        SEC& $\sigma^{eff} \geq 0 $ and $\sigma^{eff} +3p^{eff} \geq 0$ & $(a+6b)\mathcal{R} \leq 0$ \\
         & & and $ (a+2b)\mathcal{R} \leq 0$ \\
              \hline
              \end{tabular}
\end{table}

\section{Discussion}
The physical inspiration for researching numerous spacetime models in cosmology is to gain additional insight into particular phases of the universe's evolution, which can be split in the following ways:\\
(i) The initial, (ii) The intermediate and (iii) The final stage.\par
The first stage is concerned with viscous fluid and the second with non-viscous fluid While both stages admit heat flux. The last stage, which has thermal equilibrium, describes the ideal fluid stage. We consider the last stage in our study, and it is illustrated that $\mathcal{N}$ represents a PFS.\par

We investigate the $f(R)$ gravity model with the geometric restriction of $(PBS)_{4}$ spacetime, and find that $p$ and $\sigma$ are not constants. Therefore, we could say that the current universe is consistent with the $(PBS)_{4}$ spacetime.\par

The initial $f(\mathcal{R})$-model, $f(\mathcal{R}) = \mathcal{R} +\alpha \mathcal{R}^{2}$, ($\alpha>0$) presented by Starobinsky \cite{aas} was aimed at explaining cosmic inflation as a pure gravitational effect without the use of dark energy. Carroll et al. \cite{car} presented the model $f(\mathcal{R}) = \mathcal{R}-\frac{\mu^{4}}{\mathcal{R}}$, ($\mu > 0$) to explain late-time acceleration as a scalar field. Despite their limitations, these models were still able to popularize $f(\mathcal{R})$-models in general. In \cite{LDMS21}, by choosing the model $f(\mathcal{R}) = \mathcal{R}-\alpha (1-e ^{-\frac{\mathcal{R}}{\alpha}})$ the authors have shown that NEC, WEC and DEC have been satisfied, whereas SEC violated. Here, our findings have been assessed both analytically and graphically. Our formulation was constructed using the analytical technique, and one cosmological model, $f\left(\mathcal{R}\right)=e^{(\alpha \mathcal{R})}-ln(\beta \mathcal{R})$, was evaluated for stability. For this model, we find that NEC, WEC, DEC and SEC are satisfied. Also, this finding yields the late-time acceleration of the Cosmos\cite{LDMS21} and each result aligns with the $\Lambda$CDM model \cite{AAAABBBBBBB20}.\par

In \cite{cap1}, it is deduced that a GRW spacetime of dimension $n$ with $\nabla_{k}C^{k}_{lij} =0$ reveals a perfect fluid type energy momentum tensor for any $f(\mathcal{R})$ gravity model. Hence, from Theorem \ref{th3.4}, we state
the following:\par
A $(PBS)_{4}$ $\mathrm{GRW}$ spacetime represents a perfect fluid type energy momentum tensor for any model of $f(\mathcal{R})$ gravity.\par

 A collective study of validation of energy conditions are examined and the outcome is mentioned in Table 1. It is seen that the presence of exotic matter is not required for our case.

\section{Declarations}
\subsection{Funding }
NA.
\subsection{Code availability}
NA.
\subsection{Availability of data}
NA.
\subsection{Conflicts of interest}
The authors affirm that they do not have any competing interests.

\section{Acknowledgement}
The authors are thankful to the referee for his or her valuable suggestions towards the improvement of the paper.

\end{document}